\newtheorem{theorem}{Theorem}
\newtheorem{proposition}{Proposition}
\newtheorem{definition}{Definition}
\newtheorem{lemma}{Lemma}
\newtheorem{corollary}{Corollary}
\DeclareMathOperator{\rank}{rank}
\DeclareMathOperator{\nsc}{nsc}             
\DeclareMathOperator{\supp}{supp}           
\DeclareMathOperator{\img}{im}              
\renewcommand{\subset}{\subseteq}
\newcommand{\range}[1]{[\![#1]\!]}          
\newcommand{\RR}{\mathbb{R}}
\newcommand{\OO}{\bm{\mathcal{O}}}       
\renewcommand{\AA}{\bm{A}}          
\newcommand{\CC}{\bm{C}}            
\newcommand{\II}{\bm{I}}            
\newcommand{\PP}{\bm{P}}            
\newcommand{\PPsi}{\bm{\Psi}}       
\newcommand{\TTheta}{\bm{\Theta}}   
\newcommand{\GGamma}{\bm{\Gamma}}   
\newcommand{\yy}{\bm{y}}            
\newcommand{\xx}{\bm{x}}            
\newcommand{\uu}{\bm{u}}            
\newcommand{\vv}{\bm{v}}            
\newcommand{\hh}{\bm{h}}            
\newcommand{\UU}{\mathcal{U}}
\newcommand{\VV}{\mathcal{V}}
\title{\LARGE \bf
Necessary and Sufficient Conditions for Simultaneous State and Input Recovery of Linear Systems with Sparse Inputs by $\ell_1$-Minimization
}
\author{Kyle Poe, Enrique Mallada, Ren\'e Vidal
\thanks{K. Poe is with the Department of Biomedical Engineering, Johns Hopkins University,
MD 21218
        {\tt\small kpoe2@jh.edu}}%
\thanks{E. Mallada is with the Department of Electrical and Computer Engineering, Johns Hopkins University,
MD 21218
        {\tt\small mallada@jhu.edu}}%
        \thanks{R. Vidal is with the Department of Electrical and Systems Engineering, University of Pennsylvania,
PA 19104
        {\tt\small vidalr@seas.upenn.edu}}%
    }
\begin{document}
\bstctlcite{MyBSTcontrol} 

\maketitle
\thispagestyle{empty}
\pagestyle{empty}

\begin{abstract}

    The study of theoretical conditions for recovering sparse signals from compressive measurements has received a lot of attention in the research community. In parallel, there has been a great amount of work characterizing conditions for the recovery both the state and the input to a linear dynamical system (LDS), including a handful of results on recovering sparse inputs. However, existing sufficient conditions for recovering sparse inputs to an LDS are conservative and hard to interpret, while necessary and sufficient conditions have not yet appeared in the literature. In this work, we provide (1) the first characterization of necessary and sufficient conditions for the existence and uniqueness of sparse inputs to an LDS, (2) the first necessary and sufficient conditions for a linear program to recover both an unknown initial state and a sparse input, and (3) simple, interpretable recovery conditions in terms of the LDS parameters. We conclude with a numerical validation of these claims and discuss implications and future directions.
\end{abstract}

\section{INTRODUCTION}

A foundational concept in systems theory is that of \textit{observability}, the condition guaranteeing uniqueness of the initial state of a system given knowledge of the inputs and a sufficient number of measurements for the output \cite{kalmanMathematicalDescriptionLinear1963}. Introduced later was the more stringent notion of \textit{strong observability}, which further guarantees the uniqueness of the initial condition even in the presence of unknown inputs, and is known to be equivalent to the system having no invariant zeros \cite{payneDiscreteTimeAlgebraic1973a}. These conditions have been used to concisely characterize conditions under which either the initial state or inputs to a system, or both, can be recovered, even in the absence of the other \cite{houInputObservabilityInput1998, ansariDeadbeatUnknowninputState2019, kurekStateVectorReconstruction1983a, molinariStrongControllabilityObservability1976}. Of particular relevance to time-critical applications is the development of deadbeat or finite-time input reconstructors, which in the discrete setting have been formulated in terms of solutions to a block Toeplitz system \cite{ansariDeadbeatUnknowninputState2019}. 

A linear system is, in particular, a compact means of describing a linear relationship $\yy = \PPsi\uu$ between a sequence of of inputs $\uu$ and a sequence of observations $\yy$, from which even in the most optimistic circumstances generic $\uu$ can only be reconstructed up to $\ker\PPsi$.
However, by assuming that $\uu$ is sparse in an appropriate sense, established results in sparse recovery provide favorable guarantees on exact reconstruction for an appropriate choice of optimization algorithm.
The most common such case considered is when $\uu$ is assumed to have support of size not greater than $s$, and is termed \textit{regular sparsity}. Other support-based notions of sparsity include block \cite{eldarBlockSparsityCoherenceEfficient2008}, group \cite{jenattonStructuredVariableSelection}, and tree-based sparsity \cite{heExploitingStructureWaveletBased2009}, and are each subsumed by the more general notion of model-based sparsity \cite{baraniukModelBasedCompressiveSensing2010}.

For each of these sparsity patterns, the literature has provided tailored optimization problems and recovery guarantees, with varying levels of robustness to noise, ease of checking, and conceptual nuance. For many applications in the noiseless setting, simple $\ell_1$-minimization has proven to be the approach of choice due to its relative conceptual ease, implementability as a linear program, and favorable performance even when compared with tailored regularizers \cite{elhamifarBlockSparseRecoveryConvex2012}. For regular sparsity, the necessary and sufficient condition for successful unique recovery is the satisfaction of the so-called \textit{nullspace property} (NUP), which requires vectors in the nullspace of $\PPsi$ to have smaller $\ell_1$ norm on $s$-sparse supports than on the complement of $s$-sparse supports (see Def.~\ref{def:NUP}). Recent results have even shown that for any support-based notion of sparsity, there exists a straightforward extension of the NUP, termed the \textit{generalized nullspace property}, which provides necessary and sufficient guarantees \cite{kabaWhatLargestSparsity2020}.

In light of the success of this approach to signal reconstruction, recent literature has provided tailored algorithms for sparse recovery in linear dynamical systems (LDSs), where the assumption of sparsity has been variously made on the initial conditions \cite{wakinObservabilityLinearSystems2010, josephObservabilityLinearSystem2018, josephMeasurementBoundsObservability2019, daiObservabilityLinearSystem2013}, dynamics \cite{charlesDynamicFilteringTimeVarying2016}, measurement noise \cite{kahlLocalizationInvariableSparse2021a}, and inputs \cite{asifEstimationDynamicUpdating2011, sefatiLinearSystemsSparse2015, kafashanRelatingObservabilityCompressed2016,  josephNoniterativeOnlineBayesian2017}. Even with all of this prior work, there are few existing guarantees on the performance of these algorithms, and the guarantees that have been produced are typically probabilistic in nature or make restrictive assumptions on the sparsity patterns, such as the state and inputs being simultaneously sparse with respect to an orthogonal dictionary. As a result, many results for the general, noiseless setting, including the establishment of necessary and sufficient guarantees, have not yet appeared in the literature. Our focus in this work is on establishing such guarantees for the basis-pursuit style optimization problem introduced in \cite{sefatiLinearSystemsSparse2015}, where the initial state is not sparse, but the inputs are assumed to follow an appropriate generalized support pattern. Existing conditions for even the basic version of this problem are very conservative, and necessary and sufficient conditions have not yet made an appearance.

In this work, we consider the problem of jointly inferring the initial state $\xx_0$ and sparse inputs $U_N = (\uu_0, \ldots, \uu_{N-1})$ of an LDS 
$\Sigma = (\AA, \PPsi, \CC)$ without a feedthrough term, i.e., 
\begin{equation}
\label{eq:system}
    \begin{aligned}
        \xx_{k+1} &= \AA\xx_k + \PPsi \uu_k, & \xx_k \in \mathbb{R}^n, & \uu_k \in \mathbb{R}^m\\
        \yy_k &= \CC\xx_k, & \yy_k \in \mathbb{R}^p,
    \end{aligned}
\end{equation}
from $N+1$ output measurements $Y_N = (\yy_0, \ldots, \yy_N)$. 
In particular, this work makes the following contributions:
\begin{enumerate}
    \item Necessary and sufficient conditions for uniqueness of $\xx_0$ and sparse $U_N$ given $Y_N$.
    \item Necessary and sufficient conditions for the $\ell_1$-minimization approach of \cite{sefatiLinearSystemsSparse2015} to uniquely recover $\xx_0$ and sparse $U_N$ given $Y_N$.
    \item Interpretable conditions which are respectively necessary or sufficient for unique solutions to the $\ell_1$-minimization approach of \cite{sefatiLinearSystemsSparse2015} in terms of the system parameters, and elaboration on situations where these conditions achieve equality.
    \item Illustration of
    the accuracy of these conditions and provision of intuition for when they are most informative through simulations of random LDSs.
\end{enumerate}

\section{PRELIMINARIES}

\subsection{Notation}
\subsubsection{Sets and Vector Spaces}
Define $\mathbb{N} = \{0, 1, 2, \ldots \}$, $\range{n}:= \{0, 1, 2, \ldots, n-1\}$. We use capital script letters to denote vector subspaces $\mathcal{V} \subset \mathbb{R}^n$.
When $\VV, \UU \subseteq \mathbb{R}^n$, we denote $\VV + \UU := \{\vv + \uu: \vv \in \VV, \uu \in \UU\} \subseteq \mathbb{R}^n$ and $\VV^\perp := \{\xx \in \mathbb{R}^n: \forall \vv \in \VV, \langle \xx, \vv\rangle = 0\}$. 

\subsubsection{Linear Operators}
For any matrix $\AA$ and subspace $\UU$, define $\AA\UU := \{\AA\xx: \xx \in \UU\}$.
$\AA^{-1}$ is defined to be the inverse matrix of $\AA$ if it exists, and for any affine subspace $\VV \subset \mathbb{R}^n$, $\AA^{-1} \VV = \{ \xx \in \mathbb{R}^m: \AA \xx \in \VV\}$. We likewise denote the Moore-Penrose pseudoinverse as $\AA^+$.

\subsubsection{Supports and Norms}
For $\xx \in \mathbb{R}^m$, denote $\supp(\xx) := \{i \in \range{m}: x_i \ne 0\}$. Denote column $i$ of a matrix $\AA$ to be $\AA_i$, and block column $i$ if $\AA$ is a block matrix. For any subset $S \subset \range{m}$, $\AA_S$ is the submatrix of $\AA \in \mathbb{R}^{n \times m}$ with columns $(\AA_S)_i = \AA_{S_i}$. If $S = (S_k)$ is a tuple of sets and $\GGamma$ is a block matrix, denote $\GGamma_S$ the block matrix with block columns $(\GGamma_S)_k = (\GGamma_k)_{S_k}$. For a block vector $U$, $(U_S)_k = (U_k)_{S_k}$. Likewise if $S, S'$ are two tuples of sets with the same length, define $S\cup S'$ the tuple of sets s.t. $(S\cup S')_k = S_k \cup S_k'$. We denote the $\ell_0$ semi-norm as $\|\xx\|_0 = |\supp(\xx)|$ and the $\ell_1$ and $\ell_2$ norms as $\|\cdot\|_1$ and $\|\cdot\|_2$, 
respectively.

\subsection{The Nullspace Property}
\noindent Given a matrix $\TTheta \in \mathbb{R}^{n \times m}$, where we assume that $\TTheta$ has linearly dependent columns ($\rank{\TTheta} < m$),
a central problem of sparse recovery 
is to inquire, under which assumptions on the support of $\xx$ are there unique solutions to $\yy = \TTheta \xx$, and what are the algorithms with such unique recovery guarantees? A standard approach is to begin with the optimization problem $P_0$ that finds the sparsest solution, known to be NP-hard, and proceed to the convex relaxation $P_1$, known as basis pursuit:
\begin{align*}
    \min_{\xx} \|\xx\|_0,\mbox{ such that } \yy = \TTheta \xx \tag{$P_0$} \\
    \min_{\xx} \|\xx\|_1,\mbox{ such that } \yy = \TTheta \xx \tag{$P_1$}
\end{align*}

Denote $\Delta_s(m) := \{S \subset \range{m}: |S| \le s\}$ to be the set of $s$-sparse supports for vectors in $\mathbb{R}^m$, or simply $\Delta_s$ when $m$ is fixed. A classic result in sparse recovery is that any $s$-sparse solution to $P_1$ is the unique solution, if and only if $\TTheta$ satisfies the $s$-NUP:
\begin{definition}[Nullspace Property (NUP)]
\label{def:NUP}
    The matrix $\TTheta \in \mathbb{R}^{n \times m}$ satisfies the \textit{nullspace property of order $s$} ($s$-NUP) if $\forall \hh \in \ker{\TTheta} \setminus \{0\}$, $\forall S \in \Delta_s$, $\|\hh_S\|_1 < \|\hh_{S^c}\|_1$.
\end{definition}

\subsection{The Generalized Nullspace Property}
The motivating observation of \cite{kabaWhatLargestSparsity2020} is that sparsity structures tend to satisfy the property that if $S$ is a valid sparse support, so too is $S' \subset S$. This relationship describes an abstract simplicial complex:

\begin{definition}[Abstract Simplicial Complex (ASC)]
    Let $\Delta$ be a set of sets. $\Delta$ is an \emph{abstract simplicial complex} if $\forall S \in \Delta$, $\forall S' \subset S$, $S' \in \Delta$. If for some $m \in \mathbb{N}$, $\Delta \subset \{S: S \subset \range{m}\}$, we say that $\Delta$ is an ASC over $\range{m}$.
\end{definition}

One can quickly check that $\Delta_s(m)$ is an ASC over $\range{m}$, so ASCs comprise a strict generalization of regular sparsity. We will thus refer to any vector $\xx$ such that $\supp{\xx} \in \Delta$ as $\Delta$-sparse.
We additionally make the convenient definition:
\begin{equation}
    \mathcal{S}(\Delta) := \{\xx \in \RR^m: \supp \xx \in \Delta\}
    \end{equation}
which may be geometrically interpreted as the union of subspaces spanned by basis vectors $\{e_{k}\}_{k \in S}$, where $S \in \Delta$.
The associated result of \cite{kabaWhatLargestSparsity2020} is key:
\begin{definition}[Generalized Nullspace Property (GNUP)]
    Let $\PPsi \in \mathbb{R}^{n \times m}$, and let $\Delta$ be an ASC over $\range{m}$. We say that $\PPsi$ satisfies the \textit{generalized nullspace property with respect to $\Delta$} ($\Delta$-NUP) if $\forall \hh \in \ker{\PPsi} \setminus \{0\}$, $\forall S \in \Delta$, $\|\hh_S\|_1 < \|\hh_{S^c}\|_1$. Equivalently, 
    \begin{equation}
    \nsc(\PPsi, \Delta) := \max_{S \in \Delta} \max_{\hh \in \ker{\PPsi} \setminus \{0\}} \frac{\|\hh_S\|_1}{\|\hh\|_1} < \frac{1}{2}
    \end{equation}
    where $\nsc(\PPsi, \Delta)$ is called the \textit{nullspace constant}.
\end{definition}

\begin{proposition}\label{prop:gnup}
    Let $\Delta$ be a simplicial complex over $\range{m}$. Then any $\Delta$-sparse solution to $P_1$ is the unique solution, if and only if $\TTheta$ satisfies the $\Delta$-NUP.
\end{proposition}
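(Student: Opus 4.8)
The plan is to adapt the classical nullspace-property argument to the abstract simplicial complex setting, proving sufficiency and necessity separately. The single structural fact I will lean on throughout is that $\Delta$ is an ASC: if $S \in \Delta$, then every subset of $S$ lies in $\Delta$, so any vector whose support is contained in such an $S$ is automatically $\Delta$-sparse, i.e.\ lies in $\mathcal{S}(\Delta)$.

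For sufficiency ($\Delta$-NUP $\Rightarrow$ unique recovery), I would fix a $\Delta$-sparse feasible point $\xx^*$ with $\supp(\xx^*) \subseteq S \in \Delta$ and let $\xx'$ be any other feasible point, so that $\hh := \xx' - \xx^* \in \ker\TTheta \setminus \{0\}$. Splitting $\|\xx'\|_1 = \|\xx^* + \hh\|_1$ across $S$ and $S^c$, using $(\xx^*)_{S^c} = 0$ together with the reverse triangle inequality on $S$, I obtain $\|\xx'\|_1 \ge \|\xx^*\|_1 - \|\hh_S\|_1 + \|\hh_{S^c}\|_1$. Applying the $\Delta$-NUP to the nonzero $\hh$ and the support $S$ gives $\|\hh_S\|_1 < \|\hh_{S^c}\|_1$, so the correction term is strictly positive and $\|\xx'\|_1 > \|\xx^*\|_1$; hence $\xx^*$ is the unique minimizer.

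For necessity I would argue by contrapositive. If the $\Delta$-NUP fails, there exist $\hh \in \ker\TTheta \setminus \{0\}$ and $S \in \Delta$ with $\|\hh_S\|_1 \ge \|\hh_{S^c}\|_1$, where $\hh_S$ denotes $\hh$ zeroed outside $S$. Setting $\xx^* := \hh_S$ and $\xx' := -\hh_{S^c}$, the identity $\TTheta\hh = 0$ gives $\TTheta\xx^* = -\TTheta\hh_{S^c} = \TTheta\xx'$, so both are feasible for $\yy := \TTheta\xx^*$, while $\xx^* - \xx' = \hh \ne 0$ shows they are distinct. Since $\supp(\xx^*) \subseteq S \in \Delta$, the ASC property forces $\xx^* \in \mathcal{S}(\Delta)$, so $\xx^*$ is a legitimate $\Delta$-sparse target; and $\|\xx'\|_1 = \|\hh_{S^c}\|_1 \le \|\hh_S\|_1 = \|\xx^*\|_1$ shows $\xx^*$ is not the unique $\ell_1$-minimizer, so recovery fails.

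The two triangle-inequality computations are routine; the one point demanding care is in the necessity direction, where I must verify that the constructed counterexample $\xx^*$ is genuinely admissible as a $\Delta$-sparse signal. This is precisely where closure of $\Delta$ under subsets is indispensable, and it is the only place the argument departs from the regular-sparsity proof: for $\Delta_s$ the admissibility is immediate, whereas for a general ASC it is exactly the simplicial-complex hypothesis that makes the construction valid. I therefore expect this admissibility check to be the main (if modest) obstacle.
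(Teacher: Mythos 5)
Your proof is correct and is precisely the standard nullspace-property equivalence argument transplanted to the ASC setting: sufficiency via the split of $\|\xx^*+\hh\|_1$ over $S$ and $S^c$ with the reverse triangle inequality, and necessity via the counterexample pair $\hh_S$, $-\hh_{S^c}$, where the one genuinely new ingredient---that closure of $\Delta$ under subsets makes $\hh_S$ an admissible $\Delta$-sparse target---is exactly the point you flag and handle. Note that the paper states Proposition~\ref{prop:gnup} without proof, citing \cite{kabaWhatLargestSparsity2020}, so there is no in-paper argument to compare against; yours matches the classical proof that the cited result generalizes.
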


This result potentially enables necessary and sufficient conditions for much more general types of sparsity patterns than are classically admissible for $P_1$. As a simple example, one can remark that the $\Delta_s$-NUP is equivalent to the $s$-NUP, so this is a generalization; but it also encompasses e.g. group sparsity. Since the GNUP is essentially a statement about the kernel of a particular matrix, it is natural to extend this characterization to any subspace with an appropriate choice of basis:
for a subspace $\UU \subset \mathbb{R}^n$ define $$\nsc(\UU, \Delta)  := \max_{S \in \Delta} \max_{\hh \in \UU \setminus \{0\}} \frac{\|\hh_S\|_1}{\|\hh\|_1}$$ and say that $\UU$ satisfies the $\Delta$-NUP if $\nsc(\UU, \Delta) < \frac{1}{2}$. 

\subsection{Linear Dynamical Systems with Sparse Inputs}

In this section, we will state the main problem of the paper. For an LDS $\Sigma = (\AA,\PPsi, \CC)$ whose state-space equations are defined in \eqref{eq:system}, we define the associated block matrices:
\begin{equation}
\begin{split}
    \OO_N &= \begin{bmatrix}
        \CC \\ \CC\AA \\ \vdots \\ \CC\AA^{N}
    \end{bmatrix},\, Y_N = \begin{bmatrix}
        \yy_0 \\ \yy_1 \\ \vdots \\ \yy_{N}
    \end{bmatrix}, \, U_N = \begin{bmatrix}
        \uu_0 \\ \uu_1 \\ \vdots \\ \uu_{N-1}
    \end{bmatrix} \\ \GGamma_N &= \begin{bmatrix}
        0 & 0 & \cdots & 0 \\
        \CC\PPsi & 0 & \cdots & 0 \\
        \CC\AA\PPsi & \CC\PPsi & \cdots & 0 \\
        \vdots & \vdots & \ddots\\
        \CC\AA^{N-1}\PPsi & \CC\AA^{N-2} & \cdots & \CC\PPsi
    \end{bmatrix},
    \end{split}
\end{equation}
where we refer to $\OO_N$ as the \textit{observability matrix} of $\Sigma$ and to $\GGamma_N$ as the \textit{input-output matrix}.
The reader should note the discrepancy in the number of entries of $Y_N$ and $U_N$; we have opted to admit this asymmetry for notational convenience. 

It follows that
\begin{equation}
    Y_N = \OO_N \xx_0 + \GGamma_N U_N, 
\end{equation}
 Of central interest in this paper is the case where $\forall k$, $\supp{\uu_k} \in \Delta$, where $\Delta$ is a simplicial complex over $\range{m}$. In this case, we say that the input $\uu$ and respectively the block vector $U_N$ is \textit{entrywise $\Delta$-sparse}; we may equivalently refer to $U_N$ as $\Delta^N$-sparse. 
We might then ask, under what conditions on what optimization problems can we recover $\xx_0$ and $U_N$ from $Y_N$, given the assumption that $U_N$ is $\Delta^N$-sparse? 

The optimization problem we focus on is the following, introduced in \cite{sefatiLinearSystemsSparse2015}:
\begin{equation}
    \min_{\xx_0, U_N} \|U_N\|_1\mbox{ s.t. } Y_N = \OO_N \xx_0 + \GGamma_N U_N 
    \tag{$D_1$}.
    \label{eqn:D1}
\end{equation}
This optimization problem may be thought of as an implementation of basis pursuit ($P_1$) for linear systems. Analogously, we would like to characterize the conditions under which this problem is well posed--i.e. no two entrywise $\Delta$-sparse inputs and generic initial conditions produce the same output--and when \eqref{eqn:D1} uniquely recovers such $\Delta$-sparse inputs and initial conditions jointly. The former condition may be thought of as an injectivity condition, in the sense that the output $Y_N$ uniquely specifies the initial condition and inputs up to $\Delta^N$-sparsity. Of additional interest are the cases in which the state space is not sufficiently observable to uniquely determine $x$ given $Y_N$, but we may still recover or uniquely characterize $\Delta^N$-sparse $U_N$. To these ends, we define the following:
\begin{definition}\label{def:system_conds}
Let $\Sigma = (\AA, \PPsi, \CC)$ be a linear system.
    \begin{itemize}
        \item $\Sigma$ is\textit{ jointly $\Delta^N$-injective} if $U,U' \in \mathcal{S}(\Delta^N)$ and $\OO_N\xx'_0 + \GGamma_NU'_N = \OO_N\xx_0 + \GGamma_NU_N \implies (\xx_0, U_N) = (\xx_0', U_N')$. We write the set of all such $\Sigma$ as $\mathcal{R}^*_0(\Delta, N)$.
        \item $\Sigma$ is \textit{input $\Delta^N$-injective} if $U,U' \in \mathcal{S}(\Delta^N)$ and $\OO_N\xx'_0 + \GGamma_NU'_N = \OO_N\xx_0 + \GGamma_NU_N \implies U_N = U_N'$. We write the set of all such $\Sigma$ as $\mathcal{R}_0(\Delta, N)$.
        \item $\Sigma$ is \textit{jointly $\Delta^N$-recoverable with \eqref{eqn:D1}} if any solution $(\xx_0, U_N)$ to \eqref{eqn:D1} s.t. $U_N \in \mathcal{S}(\Delta^N)$ is necessarily the unique solution. We write the set of all such $\Sigma$ as $\mathcal{R}^*_1(\Delta, N)$.
        \item $\Sigma$ is \textit{input $\Delta^N$-recoverable with \eqref{eqn:D1}} if any two solutions $(\xx_0, U_N), (\xx'_0, U'_N)$ to \eqref{eqn:D1} s.t. $U_N, U_N' \in \mathcal{S}(\Delta^N)$ satisfy $U_N = U_N'$. We write the set of all such $\Sigma$ as $\mathcal{R}_1(\Delta, N)$.
    \end{itemize}
\end{definition}

The condition established in \cite{sefatiLinearSystemsSparse2015} for when $\Sigma$ is jointly $\Delta_s^N$-recoverable with \eqref{eqn:D1} is based on the coherence $\mu: \mathbb{R}^{n\times m} \to [0,1]$, defined as
\begin{equation}
    \mu(\TTheta) = \max_{i \ne j} \frac{|\langle \TTheta_i, \TTheta_j \rangle|}{\|\TTheta_i\|_2\|\TTheta_j\|_2}
\end{equation}
Henceforth, we define $\PP_N^\perp := \II - \OO_N\OO_N^+$, the orthogonal projection onto the orthogonal complement of the column space of $\OO_N$. The main result of \cite{sefatiLinearSystemsSparse2015}, which may be implicitly read as a sufficient condition for $\PP_N^\perp \GGamma_N$ to satisfy the $Ns$-NUP, is as follows:
\begin{proposition}\label{prop:sefati}
    If $\ker{\OO_N} =0$ and $$\mu(\PP_N^\perp \GGamma_N)~<~\frac{1}{2Ns - 1}$$ then $\Sigma$ is jointly $\Delta_s^N$-recoverable with \eqref{eqn:D1}.
\end{proposition}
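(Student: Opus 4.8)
The plan is to collapse $(D_1)$ to an ordinary basis-pursuit problem governed by the single matrix $\PP_N^\perp \GGamma_N$ and then invoke Proposition~\ref{prop:gnup}. First I would exploit $\ker \OO_N = 0$: full column rank gives $\OO_N^+ \OO_N = \II$, hence $\PP_N^\perp \OO_N = (\II - \OO_N\OO_N^+)\OO_N = 0$. Applying $\PP_N^\perp$ to the constraint $Y_N = \OO_N \xx_0 + \GGamma_N U_N$ annihilates the state term and produces the reduced constraint $\PP_N^\perp Y_N = \PP_N^\perp \GGamma_N U_N$, while conversely any $U_N$ meeting this equation pins down a unique feasible state $\xx_0 = \OO_N^+(Y_N - \GGamma_N U_N)$. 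This sets up a bijection between feasible pairs of $(D_1)$ and feasible points of $\min_{U} \|U\|_1$ subject to $\PP_N^\perp Y_N = \PP_N^\perp \GGamma_N U$ under which the objectives coincide; consequently $\Sigma \in \mathcal{R}^*_1(\Delta_s, N)$ exactly when this reduced program uniquely recovers $\Delta_s^N$-sparse inputs.

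Second, I would show the coherence hypothesis forces $\TTheta := \PP_N^\perp \GGamma_N$ to satisfy the $Ns$-NUP. For $\hh \in \ker \TTheta \setminus \{0\}$, testing the relation $\sum_j h_j \TTheta_j = 0$ against each column $\TTheta_i$ yields the diagonal-dominance (Gershgorin-type) estimate $|h_i| \le \mu \sum_{j \ne i} |h_j|$, whence $|h_i| \le \frac{\mu}{1+\mu}\|\hh\|_1$. Summing over any $S$ with $|S| \le Ns$ gives $\|\hh_S\|_1 \le \frac{Ns\,\mu}{1+\mu}\|\hh\|_1$, and the bound $\mu < \frac{1}{2Ns-1}$ is precisely what makes the right-hand coefficient strictly less than $\tfrac12$, i.e. $\|\hh_S\|_1 < \|\hh_{S^c}\|_1$. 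Because every $\Delta_s^N$-sparse vector is $Ns$-sparse, we have $\mathcal{S}(\Delta_s^N) \subseteq \mathcal{S}(\Delta_{Ns})$ and $\nsc(\TTheta, \Delta_s^N) \le \nsc(\TTheta, \Delta_{Ns})$, so the $Ns$-NUP descends to the $\Delta_s^N$-NUP.

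Finally, the ``if'' direction of Proposition~\ref{prop:gnup}, applied to the reduced program with the ASC $\Delta_{Ns}$, shows every $\Delta_{Ns}$-sparse---and hence every $\Delta_s^N$-sparse---solution is unique; transporting this back through the bijection of the first step yields $\Sigma \in \mathcal{R}^*_1(\Delta_s, N)$. The step I expect to be most delicate is the coherence-to-NUP estimate, specifically its dependence on column normalization: the diagonal-dominance argument above is clean only when the columns of $\PP_N^\perp \GGamma_N$ are unit-norm, whereas the unscaled objective $\|U_N\|_1$ in $(D_1)$ otherwise pairs with a \emph{weighted} nullspace inequality carrying weights $\|\TTheta_i\|_2$. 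Closing this gap---either by adopting the unit-norm convention implicit in the classical coherence guarantees or by absorbing the conditioning ratio $\max_i \|\TTheta_i\|_2 / \min_i \|\TTheta_i\|_2$---is the point that must be handled with care to match the statement inherited from \cite{sefatiLinearSystemsSparse2015}.
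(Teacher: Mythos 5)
Your proof cannot be compared against a proof in the paper, because the paper does not prove Proposition~\ref{prop:sefati}: it imports it verbatim from \cite{sefatiLinearSystemsSparse2015}, adding only the remark that it ``may be implicitly read as a sufficient condition for $\PP_N^\perp\GGamma_N$ to satisfy the $Ns$-NUP.'' Your argument is exactly that implicit reading made explicit, and it is sound in structure. Your first step---using $\ker\OO_N = 0$ to get $\PP_N^\perp\OO_N = 0$, reducing \eqref{eqn:D1} to $\min_U \|U\|_1$ s.t.\ $\PP_N^\perp Y_N = \PP_N^\perp\GGamma_N U$, and lifting back via the unique $\xx_0 = \OO_N^+(Y_N - \GGamma_N U_N)$---is the same reduction the paper itself uses later to prove Theorem~\ref{thm:l1}. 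Your second step is the classical diagonal-dominance coherence-to-NUP estimate; the arithmetic $\frac{Ns\,\mu}{1+\mu} < \frac{1}{2} \iff \mu < \frac{1}{2Ns-1}$ is correct, and so is the descent $\Delta_s^N \subset \Delta_{Ns}$, which matches the paper's own observation that the hypothesis in fact recovers all $U_N$ with $|\supp(U_N)| \le Ns$, not just entrywise $s$-sparse ones.

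The normalization caveat you flag at the end is genuine, and you are right not to wave it away. With the paper's column-normalized coherence $\mu(\TTheta) = \max_{i\ne j} |\langle\TTheta_i,\TTheta_j\rangle| / (\|\TTheta_i\|_2\|\TTheta_j\|_2)$ but the unweighted objective $\|U_N\|_1$, testing $\sum_j h_j\TTheta_j = 0$ against $\TTheta_i$ yields only the weighted inequality $|h_i|\,\|\TTheta_i\|_2 \le \mu \sum_{j\ne i} |h_j|\,\|\TTheta_j\|_2$, i.e.\ a nullspace property for the norm $\sum_i \|\TTheta_i\|_2\,|h_i|$; your clean estimate $|h_i| \le \mu\sum_{j\ne i}|h_j|$ requires unit-norm columns, and one cannot rescale the columns of $\PP_N^\perp\GGamma_N$ without simultaneously reweighting the objective of \eqref{eqn:D1}. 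The gap is not cosmetic for general matrices: take $\TTheta_1 = (1,0)^\top$, $\TTheta_2 = (0,1)^\top$, $\TTheta_3 = \epsilon(1,1)^\top/\sqrt{2}$, so that $\mu = 1/\sqrt{2} < 1 = \frac{1}{2s-1}$ for $s=1$, yet for small $\epsilon$ the $1$-sparse vector $(0,0,1)^\top$ is beaten in $\ell_1$ norm by $(\epsilon/\sqrt{2}, \epsilon/\sqrt{2}, 0)^\top$, so the unweighted NUP fails. Hence the literal statement needs either the unit-norm convention standard in coherence guarantees or the conditioning-ratio correction you propose; this is an implicit convention inherited from \cite{sefatiLinearSystemsSparse2015} rather than a defect of your reasoning, and with that convention adopted your write-up is a complete and correct proof, with the one loose end honestly identified.
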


As is typical of coherence-based sparse recovery guarantees, it was found that this bound was enormously conservative for even modest $N$. It is also remarked that the condition in proposition \ref{prop:sefati} also guarantees that any $\xx_0$ and $U_N$ such that $|\supp(U_N)| \le Ns$ is also recovered uniquely by \eqref{eqn:D1}, so there is reason to suspect this condition could be tightened.

\section{NECESSARY AND SUFFICIENT CONDITIONS FOR JOINT STATE AND SPARSE INPUT RECOVERY}

Let $\Delta$ be an ASC over $\range{m}$ and $\Sigma$ a linear system as in \eqref{eq:system}. Our first observation is that the difference between joint and input-only recoverability/injectivity is just a matter of observability: 
\begin{lemma}\label{lemma:obs_equiv}
    $\Sigma \in \mathcal{R}_p^*(\Delta, N)$ if and only if $\ker \OO_N = 0$ and $\Sigma \in \mathcal{R}_p(\Delta, N)$.
\end{lemma}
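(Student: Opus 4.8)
The plan is to prove both directions by relating the two injectivity/recoverability notions through the observability of the initial state. The index $p \in \{0, 1\}$ distinguishes whether we are dealing with injectivity ($p=0$, no optimization) or recoverability via \eqref{eqn:D1} ($p=1$), and I would handle both uniformly since the structure of the argument is identical—the only difference is whether the two candidate solutions arise as arbitrary $\Delta^N$-sparse preimages of $Y_N$ or as optimizers of \eqref{eqn:D1}.

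First I would prove the reverse direction ($\Leftarrow$). Assume $\ker \OO_N = 0$ and $\Sigma \in \mathcal{R}_p(\Delta, N)$. Take any two admissible pairs $(\xx_0, U_N)$ and $(\xx_0', U_N')$ (either producing the same output, or both solving \eqref{eqn:D1}) with $U_N, U_N' \in \mathcal{S}(\Delta^N)$. The input-only hypothesis $\Sigma \in \mathcal{R}_p(\Delta, N)$ immediately gives $U_N = U_N'$. Substituting back into $Y_N = \OO_N \xx_0 + \GGamma_N U_N = \OO_N \xx_0' + \GGamma_N U_N'$ and cancelling the now-equal input terms yields $\OO_N(\xx_0 - \xx_0') = 0$, so $\xx_0 - \xx_0' \in \ker \OO_N = 0$, forcing $\xx_0 = \xx_0'$. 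Hence $(\xx_0, U_N) = (\xx_0', U_N')$, which is exactly joint injectivity/recoverability, i.e. $\Sigma \in \mathcal{R}_p^*(\Delta, N)$.

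For the forward direction ($\Rightarrow$), assume $\Sigma \in \mathcal{R}_p^*(\Delta, N)$. That input $\Delta^N$-injectivity/recoverability follows is immediate, since joint uniqueness of $(\xx_0, U_N)$ trivially implies uniqueness of $U_N$; so $\Sigma \in \mathcal{R}_p(\Delta, N)$. The substantive claim is $\ker \OO_N = 0$. I would argue by contraposition: suppose $\vv \in \ker \OO_N$ with $\vv \neq 0$. Then for any admissible pair $(\xx_0, U_N)$ producing output $Y_N$, the pair $(\xx_0 + \vv, U_N)$ produces $\OO_N(\xx_0 + \vv) + \GGamma_N U_N = \OO_N \xx_0 + \GGamma_N U_N = Y_N$, the same output, with the same $\Delta^N$-sparse input $U_N$. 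For $p=0$ these are two distinct pairs with equal output, violating joint injectivity. For $p=1$ I must additionally check that both pairs are \emph{optimizers} of \eqref{eqn:D1}: but since the objective $\|U_N\|_1$ depends only on $U_N$, which is shared, and the two pairs satisfy the same constraint, $(\xx_0 + \vv, U_N)$ is feasible with identical objective value, so it is optimal whenever $(\xx_0, U_N)$ is—again contradicting joint recoverability. Either way $\ker \OO_N \neq 0$ is incompatible with $\Sigma \in \mathcal{R}_p^*(\Delta, N)$.

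The only delicate point—and the step I would flag as the main obstacle—is ensuring the forward argument for $p=1$ is not vacuous: I need at least one solution pair to exist whose $x_0$ can be perturbed along $\ker \OO_N$. This is not an issue because \eqref{eqn:D1} is a feasible linear program (the true data $Y_N$ admits a $\Delta^N$-sparse solution by construction), so an optimizer exists; perturbing its state component by any nonzero $\vv \in \ker \OO_N$ then manufactures a second optimizer with identical input, which is precisely the obstruction to membership in $\mathcal{R}_1^*(\Delta, N)$. Care is also needed to state the contraposition cleanly so that the same $\vv$-shift argument services both values of $p$ without separate calculations.
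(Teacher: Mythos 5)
Your proof is correct and takes essentially the same route as the paper's: the reverse direction cancels the shared input term and uses $\ker\OO_N = 0$ to force $\xx_0 = \xx_0'$, while the forward direction perturbs the state along a nonzero element of $\ker\OO_N$ to break joint uniqueness (the paper compresses this to ``$\xx_0$ cannot be uniquely determined by a constraint on $\OO_N\xx_0$''). Your extra care on the $p=1$ case goes beyond the paper's terse treatment; the one refinement worth making is that an arbitrary optimizer of \eqref{eqn:D1} need not have $\Delta^N$-sparse input, so to witness the violation of $\mathcal{R}_1^*(\Delta, N)$ you should pick an instance where the sparse pair is itself optimal, e.g.\ $Y_N = \OO_N\xx_0$ with $U_N = 0$, whose objective value $0$ is globally minimal.
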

\begin{proof}
    By definition, $\mathcal{R}_p^*(\Delta, N) \subset \mathcal{R}_p(\Delta, N)$. Suppose $\ker\OO_N \ne 0$, then $\xx_0$ cannot be uniquely determined by a constraint on $\OO_N\xx_0$, a contradiction. Hence $\Sigma \in \mathcal{R}_p^*(\Delta, N)$ implies $\ker \OO_N = 0$ and $\Sigma \in \mathcal{R}_p(\Delta, N)$.
    
    Now suppose $\Sigma \in \mathcal{R}_p(\Delta, N)$ and that $\ker\OO_N=0$. Then $\forall (\xx_0, U_N), (\xx_0', U_N')$ such that $U_N, U_N' \in \mathcal{S}(\Delta^N)$, $\OO_N\xx'_0 + \GGamma_NU'_N = \OO_N\xx_0 + \GGamma_NU_N \implies U_N = U_N'$, and therefore $\OO_N\xx_0 = \OO_N\xx_0'$. Since $\ker \OO_N = 0$, $\xx_0 = \xx_0'$.
\end{proof}

In \cite{sefatiLinearSystemsSparse2015}, implicit in the use of coherence for the main result was that when $\ker \OO_N = 0$, $\PP_N^\perp \GGamma_N$ satisfying the $Ns$-NUP is sufficient $\Sigma$ to be jointly $\Delta^N$-recoverable with \eqref{eqn:D1}. Per lemma \ref{lemma:obs_equiv}, we may suspect that if we relax the condition of observability, we may still determine a condition on when \eqref{eqn:D1} recovers the input, through conditions on $\PP_N^\perp \GGamma_N$. This is reflected in the fact that the following are equivalent:
\begin{gather}
    \exists \xx_0, \xx_0', \OO_N\xx_0 + \GGamma_N U_N = \OO_N\xx_0' + \GGamma_N U_N' \\
    \PP_N^\perp \GGamma_N U_N = \PP_N^\perp \GGamma_N U_N'
\end{gather}

For determining sparse input recoverability and injectivity then, we will see it is only necessary to consider properties of $\ker \PP_N^\perp \GGamma_N$.

In the case of regular sparsity, there are several equivalent ways to establish uniqueness of sparse solutions. The typical way is a rank condition on all collections of $2s$ columns of a matrix. Here we generalize this slightly, to $\Delta$-sparsity:

\begin{lemma}\label{lemma:delta_inj}
    Let $\Delta$ be an ASC over $\range{m}$.
    $\forall \xx, \xx' \in \mathcal{S}(\Delta), \TTheta\xx = \TTheta\xx' \implies \xx = \xx'$ if and only if $\forall S, S' \in \Delta$, $\ker \TTheta_{S \cup S'} = 0$. When either condition is satisfied, we say $\TTheta$ is $\Delta$-injective.
\end{lemma}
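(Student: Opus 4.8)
The plan is to prove both implications of the biconditional, mirroring the classical argument that $\TTheta$ is injective on $s$-sparse vectors if and only if every $2s$ columns are linearly independent, but adapting it to the downward-closed structure of an ASC. Throughout, the one point demanding bookkeeping is the passage between a vector in $\RR^{|S \cup S'|}$, on which the submatrix $\TTheta_{S \cup S'}$ acts, and its zero-extension in $\RR^m$.

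First I would establish the easier direction: that the kernel condition implies $\Delta$-injectivity. Suppose $\ker \TTheta_{S \cup S'} = 0$ for all $S, S' \in \Delta$, and take $\xx, \xx' \in \mathcal{S}(\Delta)$ with $\TTheta\xx = \TTheta\xx'$. Setting $S = \supp \xx$ and $S' = \supp \xx'$, both lie in $\Delta$ by definition of $\mathcal{S}(\Delta)$, and $\xx - \xx'$ is supported on $S \cup S'$. Then $\TTheta(\xx - \xx') = 0$ reads as $\TTheta_{S \cup S'}(\xx - \xx')_{S \cup S'} = 0$, and the kernel hypothesis forces $\xx - \xx' = 0$. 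This direction needs no special care beyond identifying the relevant support.

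The converse is where the subtlety lies, and it is the main obstacle. Assuming $\Delta$-injectivity, I must show $\ker \TTheta_{S \cup S'} = 0$ for arbitrary $S, S' \in \Delta$. The difficulty is that $S \cup S'$ need not itself belong to $\Delta$, so a nonzero kernel vector of $\TTheta_{S \cup S'}$ cannot be interpreted directly as a single element of $\mathcal{S}(\Delta)$; it must be split. Given $\vv \in \ker \TTheta_{S \cup S'}$, extend it to $\zz \in \RR^m$ supported on $S \cup S'$ with $\TTheta\zz = 0$, and write $\zz = \xx - \xx'$ using the disjoint decomposition $S \cup S' = S \sqcup (S' \setminus S)$: let $\xx$ agree with $\zz$ on $S$ and vanish elsewhere, and let $\xx'$ agree with $-\zz$ on $S' \setminus S$ and vanish elsewhere. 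Because $\Delta$ is downward closed, $\supp \xx \subseteq S \in \Delta$ and $\supp \xx' \subseteq S' \setminus S \subseteq S' \in \Delta$, so both $\xx, \xx' \in \mathcal{S}(\Delta)$.

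Finally I would close the argument: since $\TTheta\xx - \TTheta\xx' = \TTheta\zz = 0$, injectivity gives $\xx = \xx'$; but $\xx$ and $\xx'$ have disjoint supports, which forces $\xx = \xx' = 0$ and hence $\zz = 0$, i.e. $\vv = 0$. This establishes $\ker \TTheta_{S \cup S'} = 0$ and completes the equivalence. Once the downward-closure of $\Delta$ is invoked to legitimize the two pieces of the decomposition, every remaining step is immediate.
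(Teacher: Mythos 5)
Your proof is correct and follows essentially the same route as the paper's: one direction by restricting $\TTheta(\xx-\xx')=0$ to the support union, the other by zero-extending a kernel vector of $\TTheta_{S\cup S'}$ and splitting it into two $\Delta$-sparse pieces whose difference lies in $\ker\TTheta$. If anything, your write-up is more careful than the paper's terse version, since you make explicit the disjoint decomposition $S \sqcup (S' \setminus S)$, the appeal to downward closure to place both pieces in $\mathcal{S}(\Delta)$, and the disjoint-support argument forcing both pieces to vanish.
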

\begin{proof}
Suppose $\exists S, S' \in \Delta$, $\ker \TTheta_{S \cup S'} \ne 0$, then let $\xx, \xx' \in \RR^m$ distinct such that $\supp\xx = S$ and $\supp\xx' = S'$, and $\TTheta(\xx - \xx') = 0$. Then $\TTheta\xx = \TTheta\xx'$ but $\xx\ne\xx'$, a contradiction. Conversely, suppose $\xx, \xx' \in \mathcal{S}(\Delta)$ such that $\TTheta\xx = \TTheta\xx'$ but $\xx \ne \xx'$, then where $S = \supp\xx$ and $S'=\supp\xx'$, $\TTheta_{S\cup S'}(\xx - \xx')_{S\cup S'} = 0$, so $\ker\TTheta_{S\cup S'} \ne 0$.
\end{proof}
This notion enables concise necessary and sufficient characterizations of when $\Delta$-sparse solutions are unique, and we will put it to good use in Theorem \ref{thm:l0}.

Shortly after the publication of \cite{sefatiLinearSystemsSparse2015}, Kafashan et al. produced sufficient conditions for both joint $\Delta_s^N$-recovery with \eqref{eqn:D1} and a related optimization problem incorporating the assumption of noise \cite{kafashanRelatingObservabilityCompressed2016}. Lemma 1 of \cite{kafashanRelatingObservabilityCompressed2016} gets close to being a necessary condition when $\Delta=\Delta_s$, but it makes use of the restricted isometry constant rather than purely rank-type constraints, and is not stated as necessary and sufficient. We provide a version of the necessary and sufficient condition in the style of \cite{kafashanRelatingObservabilityCompressed2016} to reflect this contribution, making only the change of the restricted isometry condition to $\Delta$-injectivity:
\begin{theorem}[$\Delta^N$-Injectivity of $\Sigma$]\label{thm:l0}
    Let $\Delta$ be an ASC over $\range{m}$ and $\Sigma = (\AA, \PPsi, \CC)$ a linear system with state space of dimension $n$. The following are equivalent:
    \begin{enumerate}
        \item $\Sigma$ is jointly $\Delta^N$-injective
        \item $\forall S, S' \in \Delta^{N}$, $\rank{[\OO_N \,\, (\GGamma_N)_{S\cup S'}]} = n + \rank((\GGamma_N)_{S \cup S'})$ and $C\PPsi$ is $\Delta$-injective
        \item $\ker \OO_N = 0$ and $\PP_N^\perp \GGamma_N$ is $\Delta^N$-injective.
    \end{enumerate}
\end{theorem}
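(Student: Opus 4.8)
\noindent The plan is to prove the three-way equivalence by establishing (1)~$\Leftrightarrow$~(3) and (2)~$\Leftrightarrow$~(3) separately, concentrating the geometric content in one auxiliary fact: $\CC\PPsi$ is $\Delta$-injective precisely when $\GGamma_N$ is $\Delta^N$-injective. Throughout I would use Lemma~\ref{lemma:delta_inj} to rephrase every ``$\Delta$-injective'' (resp.\ ``$\Delta^N$-injective'') claim about a matrix $\TTheta$ as the vanishing of $\ker\TTheta_{S\cup S'}$ over all admissible $S,S'$, together with the standard rank identity $\rank[\OO_N\,\,M] = \rank\OO_N + \rank(\PP_N^\perp M)$, which holds because $\PP_N^\perp$ is the orthogonal projection onto $(\img\OO_N)^\perp$, so that $\rank(\PP_N^\perp M) = \dim\img M - \dim(\img M \cap \img\OO_N)$.

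For (1)~$\Leftrightarrow$~(3), Lemma~\ref{lemma:obs_equiv} with $p=0$ already reduces joint $\Delta^N$-injectivity to the conjunction of $\ker\OO_N = 0$ and input $\Delta^N$-injectivity, so it remains to identify the latter with $\Delta^N$-injectivity of $\PP_N^\perp\GGamma_N$. This is immediate from the equivalence displayed just before the theorem: for $U_N, U_N' \in \mathcal{S}(\Delta^N)$, the existence of initial states $\xx_0, \xx_0'$ producing the same output is equivalent to $\PP_N^\perp\GGamma_N U_N = \PP_N^\perp\GGamma_N U_N'$. Hence input injectivity asserts exactly that $\PP_N^\perp\GGamma_N U_N = \PP_N^\perp\GGamma_N U_N'$ forces $U_N = U_N'$ on $\mathcal{S}(\Delta^N)$, which by Lemma~\ref{lemma:delta_inj} is the $\Delta^N$-injectivity of $\PP_N^\perp\GGamma_N$.

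For (2)~$\Leftrightarrow$~(3), fix $S,S'\in\Delta^N$ and set $M = (\GGamma_N)_{S\cup S'}$. The rank identity gives $\rank[\OO_N\,\,M] = \rank\OO_N + \rank(\PP_N^\perp M)$; since $\rank\OO_N \le n$ and $\rank(\PP_N^\perp M)\le\rank M$, the equation $\rank[\OO_N\,\,M] = n + \rank M$ holds iff simultaneously $\rank\OO_N = n$ (i.e.\ $\ker\OO_N = 0$) and $\rank(\PP_N^\perp M) = \rank M$, the latter being equivalent (as $\ker M \subseteq \ker\PP_N^\perp M$) to $\ker\PP_N^\perp M = \ker M$. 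Quantifying over $S,S'$, the rank hypothesis of (2) thus amounts to $\ker\OO_N = 0$ together with $\ker\PP_N^\perp(\GGamma_N)_{S\cup S'} = \ker(\GGamma_N)_{S\cup S'}$ for all $S,S'$. I would then invoke the auxiliary equivalence to replace ``$\CC\PPsi$ is $\Delta$-injective'' by ``$\GGamma_N$ is $\Delta^N$-injective'', i.e.\ $\ker(\GGamma_N)_{S\cup S'} = 0$ for all $S,S'$; combined with the previous sentence this upgrades the equality of kernels to $\ker\PP_N^\perp(\GGamma_N)_{S\cup S'} = 0$, which is exactly (3). Every implication here is reversible.

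The crux, and the step I expect to be hardest, is the auxiliary equivalence, where I must exploit the strictly block-lower-triangular Toeplitz structure of $\GGamma_N$ whose only first-subdiagonal blocks are $\CC\PPsi$. For the forward direction I would take $T = S\cup S'$ with $S,S'\in\Delta^N$, suppose $\GGamma_N U_N = 0$ with $\supp\uu_k\subset T_k = S_k\cup S_k'$, and induct on $k$: reading block row $k+1$ and using $\uu_0=\cdots=\uu_{k-1}=0$, all Markov-parameter terms drop out and it reduces to $\CC\PPsi\uu_k = 0$; since $T_k = S_k\cup S_k'$ with $S_k,S_k'\in\Delta$, Lemma~\ref{lemma:delta_inj} applied to $\CC\PPsi$ yields $\uu_k = 0$, so $\ker(\GGamma_N)_T = 0$. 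For the converse I would argue contrapositively: if some $\vv\neq0$ with $\supp\vv\subset S_0\cup S_0'$ ($S_0,S_0'\in\Delta$) satisfies $\CC\PPsi\vv = 0$, then placing $\vv$ in the \emph{last} input block, $U_N = (0,\dots,0,\vv)$, gives $\GGamma_N U_N = 0$, because column block $N-1$ is nonzero only in the bottom block row, where it acts through $\CC\PPsi$; choosing $S = (\emptyset,\dots,\emptyset,S_0)$ and $S' = (\emptyset,\dots,\emptyset,S_0')$ (legitimate as $\emptyset\in\Delta$) exhibits a nonzero element of $\ker(\GGamma_N)_{S\cup S'}$, defeating $\Delta^N$-injectivity. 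The delicate points are tracking the Toeplitz block indexing so the induction collapses cleanly to $\CC\PPsi\uu_k = 0$, and recognizing that the final block column is the unique one isolating $\CC\PPsi$ free of contamination from the higher parameters $\CC\AA^i\PPsi$.
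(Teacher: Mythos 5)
Your proposal is correct and takes essentially the same approach as the paper: both arguments rest on the block-lower-triangular structure of $\GGamma_N$ (the last block column isolating $\CC\PPsi$, and a triangularity argument yielding $\ker(\GGamma_N)_{S\cup S'}=0$ from $\Delta$-injectivity of $\CC\PPsi$) together with the characterization $\img\OO_N\cap\img (\GGamma_N)_{S\cup S'}=0 \iff \rank\PP_N^\perp(\GGamma_N)_{S\cup S'}=\rank(\GGamma_N)_{S\cup S'}$. The differences are purely organizational — you pair the equivalences as $(1)\Leftrightarrow(3)$ (via Lemma~\ref{lemma:obs_equiv}) and $(2)\Leftrightarrow(3)$ (via a rank-additivity identity) where the paper proves $(1)\Leftrightarrow(2)$ directly and then $(2)\Leftrightarrow(3)$ by splitting $\ker[\OO_N \,\, (\GGamma_N)_{S\cup S'}]$, and you make explicit the induction behind the paper's assertion that block triangularity with full-column-rank $\CC\PPsi$ blocks forces $(\GGamma_N)_{S\cup S'}$ to have trivial kernel.
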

\begin{proof}
    Suppose $\Sigma$ is jointly $\Delta^N$-injective, then if $U_N, U_N' \in \mathcal{S}(\Delta)$, $\OO_N\xx_0 + \GGamma_N U_N = \OO_N\xx_0' + \GGamma_N U_N' \implies \xx_0' = \xx_0$ and $U_N = U_N'$. It follows that for every $S, S' \in \Delta^N$, $[\OO_N \,\, (\GGamma_N)_{S\cup S'}]$ is full column rank, so $\rank{[\OO_N \,\, (\GGamma_N)_{S\cup S'}]} = n + \rank((\GGamma_N)_{S \cup S'})$, and $\ker(\GGamma_N)_{S \cup S'} = 0$. It follows that every $((\GGamma_N)_k)_{S_k \cup S_k'}$ is full rank, so $(\OO_{N-k}\PPsi)_{S_k \cup S_k'}$ is full rank $\forall k \in \range{N}$. In particular we may take $k=0$, so $\forall S, S' \in \Delta$, $(\OO_{0}\PPsi)_{S \cup S'} = (\CC\PPsi)_{S \cup S'}$ is full rank, and therefore $\CC\PPsi$ is $\Delta$-injective. 
    
    Likewise, if $(\CC\PPsi)_{S \cup S'}$ is full rank, since $\GGamma_N$ is block triangular with $\CC\PPsi$ on the diagonal, we have that $(\GGamma_N)_{S \cup S'}$ is full rank for every $S, S' \in \Delta^N$. It follows that if $\rank{[\OO_N \,\, (\GGamma_N)_{S\cup S'}]} = n + \rank((\GGamma_N)_{S \cup S'})$, then $[\OO_N \,\, (\GGamma_N)_{S\cup S'}]$ is full rank, so if $U_N$ is $\Delta^N$-sparse and $\xx_0$ is generic such that $Y_N = \OO_N\xx_0 + \GGamma_N U_N$, they are unique. Therefore, $\Sigma$ is jointly $\Delta^N$-injective. We have thus shown that $(1) \iff (2)$. 
    
    Note that from the above, we have $(2)$ if and only if $\forall S, S' \in \Delta^{N}$, $\ker{[\OO_N \,\, (\GGamma_N)_{S\cup S'}]} = 0$, which is the case if and only if $\ker \OO_N = 0$, $\ker (\GGamma_N)_{S\cup S'}=0$, and $\img \OO_N \cap \img (\GGamma_N)_{S\cup S'} = 0$. The third point holds iff $\rank \PP_N^\perp (\GGamma_N)_{S\cup S'} = \rank (\GGamma_N)_{S\cup S'}$, so this is again equivalent to $\ker \OO_N = 0$ and $\ker \PP_N^\perp (\GGamma_N)_{S\cup S'}=\ker (\PP_N^\perp\GGamma_N)_{S\cup S'}=0$. We conclude by noting that this is equivalent to $\ker \OO_N = 0$ and $\PP_N^\perp \GGamma_N$ being $\Delta^N$-injective.

\end{proof}

Having shown the result for uniqueness of solutions/sparse injectivity, we proceed to the problem of recoverability. Recalling that we are only interested in $\Delta^N$-sparse solutions $U_N$, we could obtain a necessary and sufficient condition from the GNUP if this support pattern is an ASC. Technically, as $\Delta^N$ is a set of tuples of sets, it cannot be a simplicial complex, but if one instead interprets these tuples as disjoint unions, we uncover an ASC structure:

\begin{lemma}
     $\Delta^N$ is a simplicial complex up to bijection.
\end{lemma}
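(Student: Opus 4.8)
The plan is to exhibit an explicit bijection that turns the tuple-of-sets object $\Delta^N$ into a genuine set of subsets of a single index set, and then to check that this bijection carries the entrywise containment of $\Delta^N$ onto ordinary set containment, so that the ASC property transfers directly. Concretely, I would fix the natural indexing of the block vector $U_N$: the $k$-th block $\uu_k \in \RR^m$ occupies coordinates $\{km, km+1, \ldots, km+m-1\}$ of $\RR^{Nm}$, so that $(i,k) \mapsto km + i$ is a bijection $\range{m} \times \range{N} \to \range{Nm}$. Using it, define $\phi: \Delta^N \to \{T : T \subset \range{Nm}\}$ by $\phi(S) = \bigcup_{k \in \range{N}} \{km + i : i \in S_k\}$, which is exactly the support of the block vector whose $k$-th block is supported on $S_k$, i.e. the ``disjoint union'' interpretation suggested in the text. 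Because the blocks occupy disjoint coordinate ranges, $\phi$ is injective: from $T = \phi(S)$ one recovers $S_k = \{i : km+i \in T\}$ uniquely. Let $\tilde\Delta := \phi(\Delta^N)$ denote the image.

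The key observation, which I would state next, is that this recovery of blocks from a subset commutes with taking subsets: for any $T \subset \range{Nm}$, setting $T_k := \{i : km+i \in T\}$ gives $T \subset \phi(S)$ if and only if $T_k \subset S_k$ for every $k \in \range{N}$. With this in hand the ASC property is immediate. Given $\phi(S) \in \tilde\Delta$ with $S \in \Delta^N$ and any $T \subset \phi(S)$, each block satisfies $T_k \subset S_k \in \Delta$, so $T_k \in \Delta$ because $\Delta$ is an ASC; hence $(T_0, \ldots, T_{N-1}) \in \Delta^N$ and $T = \phi((T_k)_{k}) \in \tilde\Delta$. Thus $\tilde\Delta$ is closed under subsets, i.e. an ASC over $\range{Nm}$, and $\phi$ is a containment-preserving bijection between $\Delta^N$ and $\tilde\Delta$, which is the asserted statement.

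I expect essentially no hard obstacle here, as the content is bookkeeping; the only point requiring care is confirming that a subset of $\range{Nm}$ decomposes blockwise into a subset of each $\range{m}$, which is precisely where disjointness of the coordinate blocks is used. Had the blocks overlapped, a subset of $\phi(S)$ need not respect the block decomposition and the argument would fail, so I would be explicit that the disjoint-union encoding (rather than an ordinary union of the per-block supports) is exactly what makes the downward closure of $\Delta^N$ agree with that of $\tilde\Delta$.
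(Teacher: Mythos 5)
Your proof is correct and follows essentially the same route as the paper: both identify a tuple $(S_k)_k \in \Delta^N$ with a disjoint-union encoding (the paper uses $\bigcup_k \{k\}\times S_k$, you use the flattened indices $km+i$, which is the same construction up to relabeling) and then verify downward closure blockwise using that $\Delta$ is an ASC. Your explicit check that any subset of the encoded set decomposes blockwise is a point the paper leaves implicit, but it changes nothing substantive.
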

\begin{proof}
    It is clear that every $(S_k)_{k \in \range{N}} \in \Delta^N$ can be identified with the set $\tilde{S} = \bigcup_{k \in \range{N}} \{k\} \times S_k$. Let $S\in \Delta^N$ and suppose $\tilde{S}' \subset \tilde{S}$. Then for all $k$, $\{k\} \times S_k' \subset \{k\} \times S_k \implies S_k' \subset S_k$, so $S_k' \in \Delta$. Therefore $S' \in \Delta^N$.
\end{proof}
It is then clear that we can apply the GNUP in this context, to obtain necessary and sufficient conditions on $\Delta^N$-recovery.

\begin{theorem}[$\Delta^N$-Recoverability with \eqref{eqn:D1}]\label{thm:l1}
    Let $\Delta$ be an ASC over $\range{m}$ and $\Sigma = (\AA, \PPsi, \CC)$ a linear system. The following are equivalent:
    \begin{enumerate}
        \item $\Sigma$ is jointly $\Delta^N$-recoverable with \eqref{eqn:D1}.
        \item $\ker \OO_N = 0$ and $\PP_N^\perp \GGamma_N$ satisfies the $\Delta^{N}$-NUP.
    \end{enumerate}
\end{theorem}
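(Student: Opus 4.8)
The plan is to reduce the constrained program \eqref{eqn:D1} to an ordinary basis pursuit instance $P_1$ for the matrix $\PP_N^\perp \GGamma_N$, and then invoke Proposition~\ref{prop:gnup} together with the preceding lemma identifying $\Delta^N$ with a genuine simplicial complex. The engine of the reduction is the equivalence already recorded above: the constraint $Y_N = \OO_N\xx_0 + \GGamma_N U_N$ is solvable in $\xx_0$ if and only if $\PP_N^\perp Y_N = \PP_N^\perp\GGamma_N U_N$, since $\PP_N^\perp$ annihilates $\img \OO_N$. Because the objective $\|U_N\|_1$ of \eqref{eqn:D1} does not involve $\xx_0$, the set of optimal $U_N$-components of \eqref{eqn:D1} coincides exactly with the solution set of the reduced problem $\min_{U_N}\|U_N\|_1$ subject to $\PP_N^\perp\GGamma_N U_N = \PP_N^\perp Y_N$. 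I would state and verify this feasibility-and-objective correspondence first, as it is the conceptual crux.

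Next I would decompose joint recoverability into a state part and an input part. For $(2) \Rightarrow (1)$: given a $\Delta^N$-sparse solution $(\xx_0, U_N)$ of \eqref{eqn:D1}, its component $U_N$ is a $\Delta^N$-sparse solution of the reduced $P_1$, so by the $\Delta^N$-NUP and Proposition~\ref{prop:gnup} it is the unique solution there; hence every competing solution $(\xx_0', U_N')$ of \eqref{eqn:D1} has $U_N' = U_N$, whereupon the constraint forces $\OO_N\xx_0 = \OO_N\xx_0'$ and $\ker\OO_N = 0$ yields $\xx_0 = \xx_0'$. For $(1) \Rightarrow (2)$: to extract $\ker\OO_N = 0$ I would take $Y_N = 0$, observe that $(0,0)$ is a $\Delta^N$-sparse minimizer (the objective is nonnegative and vanishes there), and note that any nonzero $\hh \in \ker\OO_N$ would make $(\hh, 0)$ a second minimizer, contradicting uniqueness; to extract the NUP I would take an arbitrary $\Delta^N$-sparse solution $U_N$ of the reduced $P_1$ with data $\yy$, lift it to the instance $Y_N = \GGamma_N U_N$ of \eqref{eqn:D1}, where $(0, U_N)$ is then a $\Delta^N$-sparse minimizer, and read off from joint recoverability that $U_N$ is the unique reduced-$P_1$ solution, so Proposition~\ref{prop:gnup} gives the $\Delta^N$-NUP.

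The main obstacle I anticipate is bookkeeping in the correspondence between the two optimization problems rather than any deep argument: I must check that every minimizer of \eqref{eqn:D1} projects to a minimizer of the reduced problem and, conversely, that every reduced-$P_1$ solution lifts to a feasible and optimal pair via some $\xx_0$ chosen from $\OO_N^{-1}(Y_N - \GGamma_N U_N)$, and I must ensure that $\Delta^N$-sparsity is carried faithfully across the bijection of the previous lemma, so that Proposition~\ref{prop:gnup} applies verbatim to $\PP_N^\perp\GGamma_N$ with the simplicial complex identified from $\Delta^N$. Everything else is a direct combination of the projection identity with the already-established GNUP characterization.
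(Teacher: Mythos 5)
Your proposal is correct and takes essentially the same route as the paper's proof: both reduce \eqref{eqn:D1} to the basis pursuit problem $\min_{U_N}\|U_N\|_1$ subject to $\PP_N^\perp Y_N = \PP_N^\perp \GGamma_N U_N$ via the projection identity $\exists \xx_0,\, Y_N = \OO_N\xx_0 + \GGamma_N U_N \iff \PP_N^\perp Y_N = \PP_N^\perp\GGamma_N U_N$, and then invoke Proposition~\ref{prop:gnup} using the simplicial-complex structure of $\Delta^N$. The lifting/projection correspondence and the $Y_N = 0$ argument for extracting $\ker\OO_N = 0$ that you spell out are precisely the details the paper leaves implicit (the latter handled there via Lemma~\ref{lemma:obs_equiv}), so your added bookkeeping is a fleshing-out rather than a different approach.
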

\begin{proof}
    Recall that $\Sigma$ is jointly $\Delta^N$-recoverable if and only if $\ker \OO_N = 0$ and for any two solutions $(\xx_0, U_N), (\xx_0, U_N')$ to $D_1(\Sigma, N)$ such that $U_N, U_N' \in \mathcal{S}(\Delta^N)$ satisfy $U_N = U_N'$. This is equivalent to any $\Delta^N$-sparse solution to the optimization problem $\min_{U_N}\|U_N\|_1$ s.t. $\exists \xx_0, Y_N = \OO_N\xx_0 + \GGamma_N U_N$ being the unique solution, and we have that $\exists \xx_0, Y_N = \OO_N\xx_0 + \GGamma_N U_N$ if and only if $\PP_N^\perp Y_N = \PP_N^\perp \GGamma_N U_N$. Therefore, any $\Delta^N$-sparse solution to $\min_{U_N}\|U_N\|_1$ s.t. $\exists \xx_0, Y_N = \OO_N\xx_0 + \GGamma_N U_N$ is necessarily unique if and only if $\PP_N^\perp \GGamma_N$ satisfies the $\Delta^N$-NUP, and we may conclude.
\end{proof}

As system-theoretic statements, these conditions establish a deadbeat unknown-input state estimator and input reconstructor for linear systems with sparse inputs, analogous to the generic input version in \cite{ansariDeadbeatInputReconstruction2019}. From the point of view of sparse recovery, these conditions mirror the role of the spark and the standard nullspace property for $P_0$, $P_1$. Like these conditions, they are clearly NP-hard to verify; however, it is clear that in the case of regular sparsity, the number of supports to check to verify the $\Delta_s^N$-NUP is far smaller than for the $Ns$-NUP. We will see this idea reflected in the next section: in many cases, it indeed suffices to check much easier conditions. 

\section{INTERPRETABLE CONDITIONS FOR $\Delta^N$-RECOVERABILITY WITH \eqref{eqn:D1}}

In the last section, we saw that by casting $\Sigma$ as a block matrix-vector system, standard sparse recovery arguments yield a spark-like condition for $\Delta^N$-injectivity, and subsequent application of the generalized nullspace property yields an analogous condition for $\Delta^N$-recoverability. Once we have $\ker \OO_N = 0$, each of these conditions are entirely determined by attributes of $\ker P_N^\perp \GGamma_N$. This space is precisely the subspace of inputs $U_N$ for which there exists an initial condition $\xx_0$, $\GGamma_N U_N = \OO_N \xx_0$. By constructing lower and upper bounds on $\nsc (P_N^\perp \GGamma_N, \Delta^N)$ via $\nsc(\CC\PPsi, \Delta)$ and $\nsc ((\CC\PPsi)^{-1}\CC\AA\ker\CC, \Delta)$, we obtain conditions that are respectively necessary and sufficient for $\Sigma$ to be input $\Delta^N$-recoverable.

\subsection{A Necessary Condition}
In \cite{sefatiLinearSystemsSparse2015}, it was empirically observed that $\mu(\CC\PPsi) \le \mu(\GGamma_N) \le \mu(\PP_N^\perp \GGamma_N)$. The intuition one is tempted to derive from this is that a more incoherent $\CC\PPsi$ indicates a greater chance of \eqref{eqn:D1} succeeding. To concretize this idea, one need look no further than the question of recovering the very last input to a system, left untouched by the system dynamics. Specifically, it is useful to consider the one-to-one correspondence between elements of $\ker \CC\PPsi$ and inputs $U_N \in \ker \PP_N^\perp \GGamma_N$ with all zero entries except for the last:

\begin{lemma}\label{lemma:cpsi}
    Suppose $\uu$ is an input such that $\forall k < N, \uu_k = 0$ and $\uu_{N-1} \in \ker \CC\PPsi$. Then $U_N \in \ker \PP_N^\perp \GGamma_N$.
\end{lemma}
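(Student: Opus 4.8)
The plan is to exploit the block lower-triangular structure of $\GGamma_N$, and in particular to single out its last block column. Reading off the definition, the $(i,j)$ block of $\GGamma_N$ (block rows indexed $0,\ldots,N$, block columns indexed $0,\ldots,N-1$) equals $\CC\AA^{i-j-1}\PPsi$ when $i>j$ and is zero otherwise. Specializing to the last block column $j=N-1$, every block with $i\le N-1$ vanishes, and the only surviving block sits in block row $i=N$, where $i-j-1=0$ and the entry is $\CC\AA^0\PPsi=\CC\PPsi$. Thus the last block column of $\GGamma_N$ is zero except for a single $\CC\PPsi$ in its bottom block.

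Next I would evaluate $\GGamma_N U_N$ for the prescribed input. Since $U_N$ has all blocks equal to zero except the final one, $\uu_{N-1}$, the matrix-vector product collapses to the last block column of $\GGamma_N$ scaled by $\uu_{N-1}$. By the previous step this is the block vector whose only nonzero entry is $\CC\PPsi\uu_{N-1}$, located in block row $N$. The hypothesis $\uu_{N-1}\in\ker\CC\PPsi$ then forces $\CC\PPsi\uu_{N-1}=0$, so $\GGamma_N U_N=0$.

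Finally, applying the projection gives $\PP_N^\perp\GGamma_N U_N=\PP_N^\perp 0=0$, which is precisely the claim $U_N\in\ker\PP_N^\perp\GGamma_N$. There is essentially no obstacle here; the only point requiring care is the block indexing, and the (intended) reading of the hypothesis as $\uu_k=0$ for $k<N-1$ with $\uu_{N-1}$ the sole possibly-nonzero block, so that $U_N$ indeed lands in the last block column. Conceptually, this formalizes the remark preceding the lemma that the final input is left untouched by the system dynamics: it enters only the terminal output, and does so exactly through $\CC\PPsi$, so that its recoverability is governed entirely by $\ker\CC\PPsi$. This is what makes $\nsc(\CC\PPsi,\Delta)$ a natural ingredient in a necessary condition on $\nsc(\PP_N^\perp\GGamma_N,\Delta^N)$.
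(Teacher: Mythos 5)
Your proof is correct and in substance identical to the paper's: both arguments establish the stronger fact $\GGamma_N U_N = 0$ (hence $U_N \in \ker \GGamma_N \subseteq \ker \PP_N^\perp \GGamma_N$), the paper by simulating the system from $\xx_0 = 0$ and observing that the output is identically zero, you by reading off the last block column of $\GGamma_N$ --- two phrasings of the same computation, since the block Toeplitz structure is just the unrolled dynamics. Your explicit observation that the hypothesis must be read as $\uu_k = 0$ for $k < N-1$ (with $\uu_{N-1}$ the sole possibly nonzero block) correctly flags an indexing slip in the statement that the paper's own proof handles only implicitly.
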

\begin{proof}
    Suppose $U_N$ is as described, then with $\xx_0 = 0$, $k < N \implies \xx_k = 0 \implies \yy_k = 0$. We also have that $\yy_{N} = \CC \xx_{N} = \CC \PPsi \uu_{N-1} = 0$. Therefore, $U_N \in \ker \GGamma_N \subset \ker  \PP_N^\perp \GGamma_N$.
\end{proof}
This suggests the following natural necessary condition:
\begin{proposition}\label{prop:cpsi_necc}
    $\Sigma$ can only be input $\Delta^N$-recoverable with \eqref{eqn:D1} (resp. injective) if $\CC\PPsi$ satisfies the $\Delta$-NUP (resp. is $\Delta$-injective)
\end{proposition}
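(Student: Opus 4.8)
The plan is to prove the contrapositive for both claims simultaneously, leveraging Lemma~\ref{lemma:cpsi} to transport a ``bad'' vector from $\ker\CC\PPsi$ up into $\ker\PP_N^\perp\GGamma_N$ via the last-input embedding. Concretely, suppose $\CC\PPsi$ fails the $\Delta$-NUP: then there exists $\hh \in \ker\CC\PPsi\setminus\{0\}$ and $S \in \Delta$ with $\|\hh_S\|_1 \ge \|\hh_{S^c}\|_1$, i.e. $\nsc(\ker\CC\PPsi,\Delta)\ge \tfrac12$. First I would build the block input $U_N$ that is zero in every coordinate block except $\uu_{N-1}=\hh$, which by Lemma~\ref{lemma:cpsi} lies in $\ker\PP_N^\perp\GGamma_N$. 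The support of this $U_N$, viewed as a tuple, is the $\Delta^N$-support $\tilde S$ that is empty in blocks $0,\dots,N-2$ and equal to $S$ in block $N-1$, so $\tilde S \in \Delta^N$.

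Next I would observe that the $\ell_1$ norms are unchanged by the embedding: since the only nonzero block of $U_N$ is $\hh$, we have $\|(U_N)_{\tilde S}\|_1 = \|\hh_S\|_1$ and $\|(U_N)_{\tilde S^c}\|_1 = \|\hh_{S^c}\|_1$. Hence the failure inequality $\|\hh_S\|_1 \ge \|\hh_{S^c}\|_1$ lifts verbatim to $\|(U_N)_{\tilde S}\|_1 \ge \|(U_N)_{\tilde S^c}\|_1$, which witnesses that $\PP_N^\perp\GGamma_N$ violates the $\Delta^N$-NUP. By Theorem~\ref{thm:l1} (condition (2) failing means condition (1) fails, and recoverability is equivalent to input-recoverability plus observability via Lemma~\ref{lemma:obs_equiv}), $\Sigma$ is not input $\Delta^N$-recoverable with \eqref{eqn:D1}. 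This settles the recoverability half.

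For the injectivity half I would run the parallel argument with Lemma~\ref{lemma:delta_inj} in place of the GNUP. If $\CC\PPsi$ is \emph{not} $\Delta$-injective, then by Lemma~\ref{lemma:delta_inj} there exist $S,S'\in\Delta$ with $\ker(\CC\PPsi)_{S\cup S'}\ne 0$, producing a nonzero $\hh\in\ker\CC\PPsi$ with $\supp\hh \subset S\cup S'$. The same last-block embedding yields $0\ne U_N\in\ker\PP_N^\perp\GGamma_N$ whose support tuple lies in $\Delta^N$ (supported in $S\cup S'$ in the final block), so $\PP_N^\perp\GGamma_N$ is not $\Delta^N$-injective, and Theorem~\ref{thm:l0}(3) fails; thus $\Sigma$ is not input $\Delta^N$-injective.

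I do not anticipate a serious obstacle here, as the embedding of Lemma~\ref{lemma:cpsi} does all the heavy lifting and the norm-preservation is immediate from the block structure. The one point requiring care is the bookkeeping in reducing input-recoverability/injectivity to the $\PP_N^\perp\GGamma_N$ conditions: I must invoke the equivalence (stated just before Theorem~\ref{thm:l0}) that $\exists\,\xx_0,\xx_0'$ with $\OO_N\xx_0+\GGamma_N U_N=\OO_N\xx_0'+\GGamma_N U_N'$ iff $\PP_N^\perp\GGamma_N U_N=\PP_N^\perp\GGamma_N U_N'$, so that a violating element of $\ker\PP_N^\perp\GGamma_N$ genuinely produces two distinct admissible inputs with the same output. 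The phrase ``resp.\ injective'' in the statement simply signals that the injective case uses $\Delta$-injectivity and Theorem~\ref{thm:l0} rather than the NUP and Theorem~\ref{thm:l1}, so I would present both cases in a single unified argument and flag the one substitution.
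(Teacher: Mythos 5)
Your proposal is correct and follows essentially the same route as the paper: the paper's proof also pivots on Lemma~\ref{lemma:cpsi}'s last-block embedding together with the GNUP characterization, merely run in the forward direction (input recoverability forces the reduced static problem $\min_{\vv}\|\vv\|_1$ s.t.\ $\CC\PPsi\vv = \CC\PPsi\uu_{N-1}$ to uniquely recover all $\Delta$-sparse vectors, hence the $\Delta$-NUP via Proposition~\ref{prop:gnup}), while you take the contrapositive and lift an NUP-violating kernel vector, with the injectivity case dispatched in the paper by citing characterization~2 of Theorem~\ref{thm:l0}. One caution: your parenthetical chain ``Theorem~\ref{thm:l1} plus Lemma~\ref{lemma:obs_equiv}'' by itself only rules out \emph{joint} recoverability when $\ker\OO_N \ne 0$, but the reduction you flag yourself---the GNUP applied to the projected program $\min_{U_N}\|U_N\|_1$ s.t.\ $\PP_N^\perp Y_N = \PP_N^\perp\GGamma_N U_N$, exactly as in the proof of Theorem~\ref{thm:l1}---is what genuinely closes the input-recoverability case, so the argument stands as written.
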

\begin{proof}
    The case of injectivity is implied by characterization 2 of Theorem \ref{thm:l0}.
    Suppose $\Sigma$ is $\Delta^N$-recoverable. Then by lemma \ref{lemma:cpsi}, for any input $\uu$ such that $\forall k < N$, $\uu_k = 0$ and $\uu_{N-1} \in \ker\CC\PPsi$, $U_N$ is the unique input component of the solution to \eqref{eqn:D1} if $U_N$ is $\Delta^N$-sparse, so $\min_{U} \|U_N\|_1 = \min_{\vv} \|\vv\|_1$ s.t. $\CC\PPsi \uu_{N-1} = \CC\PPsi \vv$ always recovers $\Delta$-sparse $\uu_{N-1}$, hence $\CC\PPsi$ satisfies the $\Delta$-NUP.
\end{proof}
Another way to put this is in terms of the nullspace constant:
\begin{corollary}\label{corr:ness}
$\Sigma$ can only be input $\Delta^N$-recoverable if $\nsc(\CC\PPsi, \Delta) < 0.5$.
\end{corollary}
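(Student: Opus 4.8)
The plan is to obtain this as an immediate consequence of Proposition~\ref{prop:cpsi_necc} together with the definitional equivalence built into the GNUP. Proposition~\ref{prop:cpsi_necc} already establishes that input $\Delta^N$-recoverability with \eqref{eqn:D1} forces $\CC\PPsi$ to satisfy the $\Delta$-NUP. All that remains is to translate ``satisfies the $\Delta$-NUP'' into the language of the nullspace constant.

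First I would recall the statement of the GNUP: a matrix $\TTheta$ satisfies the $\Delta$-NUP precisely when $\nsc(\TTheta, \Delta) < \tfrac{1}{2}$, since the definition explicitly records the condition $\forall \hh \in \ker\TTheta \setminus \{0\}, \forall S \in \Delta, \|\hh_S\|_1 < \|\hh_{S^c}\|_1$ as equivalent to $\max_{S \in \Delta} \max_{\hh \in \ker\TTheta \setminus \{0\}} \|\hh_S\|_1 / \|\hh\|_1 < \tfrac{1}{2}$. Applying this with $\TTheta = \CC\PPsi$ immediately identifies ``$\CC\PPsi$ satisfies the $\Delta$-NUP'' with ``$\nsc(\CC\PPsi, \Delta) < 0.5$.''

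I would then chain the two facts: input $\Delta^N$-recoverability implies (by Proposition~\ref{prop:cpsi_necc}) that $\CC\PPsi$ satisfies the $\Delta$-NUP, which by the equivalence above is exactly $\nsc(\CC\PPsi, \Delta) < 0.5$. This closes the argument in a single line.

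There is no genuine obstacle here; the corollary is a restatement of Proposition~\ref{prop:cpsi_necc} in terms of the nullspace constant, and the only thing to be careful about is not inadvertently claiming a sufficient direction. The contrapositive flavor (``can only be'') should be preserved: the nullspace-constant inequality is a \emph{necessary} condition, so I would phrase the conclusion as an implication from recoverability to the bound, not the reverse.
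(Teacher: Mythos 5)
Your proposal is correct and matches the paper exactly: the paper states Corollary~\ref{corr:ness} as an immediate restatement of Proposition~\ref{prop:cpsi_necc}, with the equivalence between ``$\CC\PPsi$ satisfies the $\Delta$-NUP'' and ``$\nsc(\CC\PPsi, \Delta) < \tfrac{1}{2}$'' built directly into the definition of the GNUP, so no further argument is given or needed. You also correctly preserved the necessary-only direction of the implication.
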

It is clear that if $\CC\PPsi$ does not satisfy $\Delta$-NUP, there will always exist entrywise $\Delta$-sparse inputs which cannot be recovered by the problem \eqref{eqn:D1}. As this will be the case for any $N$, this is in a sense the tightest necessary condition for this class of problems.

\subsection{A Sufficient Condition}
We will now show that whenever the output of a system is uniformly zero, at each time point $k$ \textit{regardless} of initial condition, one must have $\uu_k \in (\CC\PPsi)^{-1}\CC\AA\ker\CC$:

\begin{lemma}\label{lemma:suff}
    Let $\uu$ be an input such that $U_N \in \ker \PP_N^\perp \GGamma_N$. Then $\forall k < N$, $\uu_k \in (\CC\PPsi)^{-1} \CC\AA \ker \CC$.
\end{lemma}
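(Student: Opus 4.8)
The plan is to translate the purely algebraic hypothesis $U_N \in \ker \PP_N^\perp \GGamma_N$ into a statement about the system trajectory, and then read off the constraint on each $\uu_k$ directly from the one-step-ahead output equation. First I would observe that $\PP_N^\perp \GGamma_N U_N = 0$ holds if and only if $\GGamma_N U_N \in \img \OO_N$, since $\PP_N^\perp = \II - \OO_N\OO_N^+$ annihilates exactly the column space of $\OO_N$. As $\img \OO_N$ is a subspace, this yields an initial condition $\xx_0$ with $\OO_N \xx_0 + \GGamma_N U_N = 0$; equivalently, running \eqref{eq:system} from this $\xx_0$ with input $U_N$ produces the identically zero output $Y_N = 0$. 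This dynamical reading, promised in the preamble, is the one genuinely conceptual step of the argument.

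With a zeroing initial condition in hand, I would expand $Y_N = 0$ componentwise. Since $\yy_k = \CC\xx_k$ for every $k \in \{0, \ldots, N\}$, the vanishing of the output forces every state along the trajectory into $\ker \CC$, that is, $\xx_k \in \ker \CC$ for all $k \le N$.

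Next I would advance the output equation by one step. For any $k < N$ we have $k+1 \le N$, so $\yy_{k+1}$ is among the constrained measurements and $0 = \yy_{k+1} = \CC \xx_{k+1} = \CC(\AA \xx_k + \PPsi \uu_k) = \CC\AA\xx_k + \CC\PPsi \uu_k$, which rearranges to $\CC\PPsi \uu_k = -\CC\AA\xx_k$. Because $\xx_k \in \ker \CC$ and $\ker\CC$ is a subspace, $-\CC\AA\xx_k = \CC\AA(-\xx_k) \in \CC\AA\ker\CC$, hence $\CC\PPsi \uu_k \in \CC\AA\ker\CC$. By the definition of the preimage under $\CC\PPsi$, this is precisely $\uu_k \in (\CC\PPsi)^{-1}\CC\AA\ker\CC$, which is the claim for every $k < N$.

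The only step requiring care is the first: correctly identifying $\ker \PP_N^\perp \GGamma_N$ with the set of inputs admitting a zeroing initial condition. Once that equivalence is in place, the remainder is a routine unrolling of the dynamics that uses no sparsity assumptions at all; note also that the target set $(\CC\PPsi)^{-1}\CC\AA\ker\CC$ is fixed independently of which zeroing $\xx_0$ is chosen, so the conclusion is insensitive to the particular initial condition selected.
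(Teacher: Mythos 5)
Your proof is correct and takes essentially the same approach as the paper's: both interpret $U_N \in \ker \PP_N^\perp \GGamma_N$ as the existence of an initial condition making $Y_N = \OO_N\xx_0 + \GGamma_N U_N = 0$, conclude that every state satisfies $\xx_k \in \ker\CC$, and read off $\CC\PPsi\uu_k = -\CC\AA\xx_k \in \CC\AA\ker\CC$ from the one-step output equation. The only difference is organizational --- you argue directly while the paper runs the same computation as a proof by contradiction (which incidentally contains a harmless off-by-one, writing $\exists k \le N-2$ where $k \le N-1$ is meant; your direct version avoids this).
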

\begin{proof}
    Let $\uu$ be as above, and let $\xx_0$ be such that $\OO_N \xx_0 + \GGamma_N U_N = Y_N = 0$. Suppose the claim is false. Then $\exists k \le N-2$, $\CC\PPsi \uu_k \notin \CC\AA \ker \CC$, so if $\xx_k \in \ker \CC$, $\CC(\AA \xx_k + \PPsi \uu_k) = \yy_{k+1} \ne 0$. Since we assumed $Y_N = 0 \implies \yy_{k+1} = 0$, this cannot be, so $\xx_k \notin \ker \CC$, which gives our contradiction as $\yy_k = \CC\xx_k \ne 0$. 
\end{proof}

A fact not immediately visible is that this subspace arises as the kernel of $\PP_1^\perp \GGamma_1$:

\begin{lemma}\label{lemma:equiv_proj}
$$\ker \PP_1^\perp \GGamma_1 = (\CC\PPsi)^{-1}\CC\AA\ker\CC$$
\end{lemma}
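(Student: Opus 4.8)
The plan is to reduce the kernel membership to a linear-algebraic statement about the $N=1$ block system and then read off the preimage directly. First I would observe that since $\PP_1^\perp = \II - \OO_1\OO_1^+$ is the orthogonal projection onto $(\img\OO_1)^\perp$, its kernel is precisely $\img\OO_1$. Consequently $\uu_0 \in \ker\PP_1^\perp\GGamma_1$ if and only if $\GGamma_1\uu_0 \in \img\OO_1$, i.e. there exists $\xx_0$ with $\GGamma_1\uu_0 = \OO_1\xx_0$ (writing $U_1 = \uu_0$).

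Next I would substitute the $N=1$ block forms $\OO_1 = \begin{bmatrix}\CC\\\CC\AA\end{bmatrix}$ and $\GGamma_1 = \begin{bmatrix}0\\\CC\PPsi\end{bmatrix}$ and compare the two block rows of the equation $\GGamma_1\uu_0 = \OO_1\xx_0$. The top block row yields $\CC\xx_0 = 0$, i.e. $\xx_0 \in \ker\CC$, while the bottom block row yields $\CC\PPsi\uu_0 = \CC\AA\xx_0$. Thus $\uu_0 \in \ker\PP_1^\perp\GGamma_1$ is equivalent to the existence of some $\xx_0 \in \ker\CC$ with $\CC\PPsi\uu_0 = \CC\AA\xx_0$, which is exactly the statement $\CC\PPsi\uu_0 \in \CC\AA\ker\CC$. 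Finally, by the preimage convention $\AA^{-1}\VV = \{\xx : \AA\xx \in \VV\}$ from the preliminaries, the set of $\uu_0$ satisfying $\CC\PPsi\uu_0 \in \CC\AA\ker\CC$ is precisely $(\CC\PPsi)^{-1}\CC\AA\ker\CC$, giving the claimed equality.

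I do not expect a genuine obstacle here, as the proof is a direct unwinding of definitions; this lemma simply isolates, in the minimal $N=1$ case, the structural fact that was argued dynamically in Lemmas~\ref{lemma:cpsi} and~\ref{lemma:suff}. The only points requiring care are the first step, namely correctly identifying $\ker\PP_1^\perp$ with $\img\OO_1$ (a standard property of orthogonal projections onto $(\img\OO_1)^\perp$), and the block-row bookkeeping, since the single nontrivial equation $\CC\PPsi\uu_0 = \CC\AA\xx_0$ couples $\uu_0$ to the auxiliary variable $\xx_0$ that must \emph{simultaneously} lie in $\ker\CC$.
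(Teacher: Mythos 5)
Your proof is correct and follows essentially the same route as the paper: both arguments identify $\ker\PP_1^\perp$ with $\img\OO_1$, unwind $\GGamma_1\uu_0 = \OO_1\xx_0$ block-row by block-row to get $\xx_0\in\ker\CC$ and $\CC\PPsi\uu_0 = \CC\AA\xx_0$, and then invoke the preimage convention. Your version merely makes explicit the block-form substitution that the paper's chain of set equalities leaves implicit.
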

\begin{proof}
\begin{align*}
\ker \PP_1^\perp \GGamma_1 
&= \{\uu: \GGamma_1 \uu \in \img \OO_1\} \\
&= \{\uu: \exists \xx_0, \CC \xx_0 = 0\mbox{ and } \CC\PPsi\uu = \CC\AA\xx_0 \} \\
&= \{\uu: \CC\PPsi\uu \in \CC\AA\ker\CC \}\\
&= (\CC\PPsi)^{-1}\CC\AA\ker\CC
\end{align*}
\end{proof}
This leads us to the following:
\begin{proposition}\label{prop:gnup_alln}
    $\Sigma$ is input $\Delta^N$-recoverable for every $N$ if and only if $\PP_1^\perp \GGamma_1$ satisfies the $\Delta$-NUP.
\end{proposition}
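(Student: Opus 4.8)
The plan is to reduce the universal statement over $N$ to a single inequality between nullspace constants, and then exploit the block structure of $\GGamma_N$ together with Lemma \ref{lemma:suff}. First I would record the input-only analogue of Theorem \ref{thm:l1}: its proof already shows that uniqueness of the input component $U_N$ among solutions of \eqref{eqn:D1} depends only on $\ker \PP_N^\perp \GGamma_N$, so $\Sigma$ is input $\Delta^N$-recoverable with \eqref{eqn:D1} if and only if $\PP_N^\perp \GGamma_N$ satisfies the $\Delta^N$-NUP (the observability hypothesis $\ker \OO_N = 0$ enters only when one additionally asks to pin down $\xx_0$, per Lemma \ref{lemma:obs_equiv}). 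Hence ``$\Sigma$ is input $\Delta^N$-recoverable for every $N$'' is equivalent to ``$\PP_N^\perp \GGamma_N$ satisfies the $\Delta^N$-NUP for every $N$,'' and the goal becomes showing the latter holds if and only if $\PP_1^\perp \GGamma_1$ satisfies the $\Delta$-NUP.

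Necessity is immediate: taking $N = 1$, the tuple index set $\Delta^1$ is identified with $\Delta$, and the $\Delta^1$-NUP of $\PP_1^\perp \GGamma_1$ is literally the $\Delta$-NUP, so if every $N$ works then in particular $N = 1$ does.

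For sufficiency --- the substantive direction --- I would set $c := \nsc(\PP_1^\perp \GGamma_1, \Delta)$ and assume $c < \frac{1}{2}$. By Lemma \ref{lemma:equiv_proj}, $\ker \PP_1^\perp \GGamma_1 = (\CC\PPsi)^{-1}\CC\AA\ker\CC$, and by definition of the subspace nullspace constant every $\hh \in \ker \PP_1^\perp \GGamma_1$ obeys $\|\hh_T\|_1 \le c\|\hh\|_1$ for all $T \in \Delta$. Now fix an arbitrary $N$, any $U_N \in \ker \PP_N^\perp \GGamma_N \setminus \{0\}$, and any $S = (S_k) \in \Delta^N$. The key localization is Lemma \ref{lemma:suff}: each block $\uu_k$ with $k \in \range{N}$ lies in $(\CC\PPsi)^{-1}\CC\AA\ker\CC = \ker \PP_1^\perp \GGamma_1$. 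Since the $\ell_1$ norm separates over blocks, I can apply the per-block bound with $T = S_k$ and sum, obtaining $\|(U_N)_S\|_1 = \sum_{k\in\range{N}} \|(\uu_k)_{S_k}\|_1 \le c \sum_{k\in\range{N}} \|\uu_k\|_1 = c\|U_N\|_1 < \frac{1}{2}\|U_N\|_1$. As $U_N$ and $S$ were arbitrary, this gives $\nsc(\PP_N^\perp \GGamma_N, \Delta^N) \le c < \frac{1}{2}$, i.e. the $\Delta^N$-NUP, for every $N$.

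The main obstacle, or rather the one genuinely non-formal ingredient, is Lemma \ref{lemma:suff}: it is what guarantees that a null input of the $N$-step system is blockwise a null input of the one-step system, which is exactly the statement that lets the $\Delta^N$-NUP factor through $N$ independent copies of the $\Delta$-NUP. Everything else is bookkeeping: the identification of $\Delta^1$ with $\Delta$, the passage from the matrix NUP to the subspace nullspace constant via Lemma \ref{lemma:equiv_proj}, and the block-separability of $\|\cdot\|_1$. I would also flag the minor point that the argument in fact yields the monotone inequality $\nsc(\PP_N^\perp \GGamma_N, \Delta^N) \le \nsc(\PP_1^\perp \GGamma_1, \Delta)$, with equality at $N = 1$, which renders both directions transparent at once.
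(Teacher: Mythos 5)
Your proposal is correct and follows essentially the same route as the paper: necessity by specializing to $N=1$, and sufficiency by combining Lemma \ref{lemma:suff} with Lemma \ref{lemma:equiv_proj} and the block-separability of $\|\cdot\|_1$ to bound $\nsc(\PP_N^\perp\GGamma_N, \Delta^N)$ by $\nsc(\PP_1^\perp\GGamma_1, \Delta)$. If anything, your version is slightly tighter in two small ways: you make explicit the input-only variant of Theorem \ref{thm:l1} that the paper invokes implicitly, and your non-strict per-block bound $\|(\uu_k)_{S_k}\|_1 \le c\,\|\uu_k\|_1$ (with strictness recovered only after summing, from $c < \tfrac{1}{2}$) is the correct form, since $\nsc$ is an attained maximum and blocks with $\uu_k = 0$ force equality, whereas the paper writes a strict inequality blockwise.
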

\begin{proof}
    $\Sigma$ is input $\Delta^N$-recoverable for every $N$ only if $\Sigma$ is input $\Delta$-recoverable; by Theorem \ref{thm:l1}, we have the forward implication. Since lemma \ref{lemma:suff} ensures that every entry $\uu_k$ of an input $U_N \in \ker \PP_N^\perp \GGamma_N$ satisfies $\uu_k \in (\CC\PPsi)^{-1}\CC\AA\ker\CC \overset{\text{lemma \ref{lemma:equiv_proj}}}{=} \ker \PP_1^\perp \GGamma_1$, for any $S \in \Delta^N$, 
    \begin{align*}
        \|(U_N)_S\|_1 &= \sum_{k \in \range{N}}\|(\uu_k)_{S_k}\|_1 \\
        &< \sum_{k \in \range{N}}\nsc(\PP_1^\perp \GGamma_1, \Delta)\|\uu_k\|_1 \\
        &= \nsc(\PP_1^\perp \GGamma_1, \Delta)\|U_N\|_1 \\
        \implies \frac{\|(U_N)_S\|_1}{\|U_N\|_1} &< \nsc(\PP_1^\perp \GGamma_1, \Delta) < \frac{1}{2}\\
        \implies& \nsc(\PP_N^\perp \GGamma_N, \Delta^N) < \frac{1}{2}
    \end{align*}
    therefore $\PP_N^\perp \GGamma_N$ satisfies the $\Delta^N$-NUP, and we conclude by Theorem \ref{thm:l1}.
    
\end{proof}
This may be also be phrased as a sufficient condition for any $N$, in terms of the nullspace constant:
\begin{corollary}\label{corr:suff}
    $\Sigma$ is input $\Delta^N$ recoverable if $\nsc(\PP_1^\perp \GGamma_1, \Delta) < 0.5$.
\end{corollary}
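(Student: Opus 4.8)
The plan is to recognize that this corollary is an immediate specialization of the sufficient (``if'') direction of Proposition~\ref{prop:gnup_alln}. By the definition of the generalized nullspace property applied to the matrix $\PP_1^\perp \GGamma_1$, the assertion that $\PP_1^\perp \GGamma_1$ satisfies the $\Delta$-NUP is precisely the inequality $\nsc(\PP_1^\perp \GGamma_1, \Delta) < \tfrac{1}{2}$. Thus the hypothesis $\nsc(\PP_1^\perp \GGamma_1, \Delta) < 0.5$ is a verbatim restatement of the $\Delta$-NUP condition on $\PP_1^\perp \GGamma_1$, and the only work required is to translate between the two phrasings.

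First I would invoke Proposition~\ref{prop:gnup_alln}: since $\PP_1^\perp \GGamma_1$ satisfies the $\Delta$-NUP, $\Sigma$ is input $\Delta^N$-recoverable for \emph{every} $N$. Specializing this universally quantified conclusion to the particular horizon $N$ appearing in the corollary yields input $\Delta^N$-recoverability, which is exactly what is to be shown. No additional estimates, constructions, or case analysis are needed.

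There is no genuine obstacle here: the corollary contains strictly less information than Proposition~\ref{prop:gnup_alln}, and all the substantive work has already been carried out in the chain Lemma~\ref{lemma:suff}, Lemma~\ref{lemma:equiv_proj}, and Proposition~\ref{prop:gnup_alln}. In particular, Lemma~\ref{lemma:suff} is what forces every entry $\uu_k$ of a null input to lie in $(\CC\PPsi)^{-1}\CC\AA\ker\CC$, Lemma~\ref{lemma:equiv_proj} identifies this subspace with $\ker \PP_1^\perp \GGamma_1$, and the summation argument inside Proposition~\ref{prop:gnup_alln} lifts the single-step nullspace-constant bound to the full horizon $N$. The sole remaining task for the corollary is therefore to convert the phrase ``satisfies the $\Delta$-NUP'' back into its defining numerical inequality and to drop the universal quantifier over $N$ down to the fixed $N$ of interest.
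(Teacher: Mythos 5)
Your proposal is correct and matches the paper exactly: the paper presents Corollary~\ref{corr:suff} as an immediate rephrasing of Proposition~\ref{prop:gnup_alln}, since $\nsc(\PP_1^\perp \GGamma_1, \Delta) < \tfrac{1}{2}$ is by definition the statement that $\PP_1^\perp \GGamma_1$ satisfies the $\Delta$-NUP. Your two-step translation --- unfold the nullspace-constant inequality into the $\Delta$-NUP, then specialize the ``for every $N$'' conclusion to the given $N$ --- is precisely the intended (and only needed) argument.
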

Since this condition is clearly also necessary for $N=1$, we might expect this to be closer to necessity for smaller values of $N$. 

\subsection{Tight Cases}
While these conditions only form the tip of the iceberg in terms of possible guarantees for $\Delta^N$-recovery, they are unlike existing conditions in the literature as they each achieve necessity and sufficiency for certain classes of systems. Implicit in proposition \ref{prop:gnup_alln} is the necessary and sufficient condition for input $\Delta$-recoverability, though this can only indicate joint $\Delta$-recoverability when $\rank \OO_1 = n$. The major condition leading to these conditions being tight is when $\ker\CC\PPsi = (\CC\PPsi)^{-1}\CC\AA\ker\CC$, which is equivalent to $\AA\ker\CC \subset \ker\CC$--that is, the system is unobservable. This hints at an intuition that systems that are more observable, e.g. systems that quickly divulge information about their initial conditions, may actually result in nullspace constants of these spaces that are further apart. We also note that $(\CC\PPsi)^{-1}\CC\AA\ker\CC = (\CC\PPsi)^+\CC\AA\ker\CC + \ker \CC\PPsi$, so characteristics of the space $(\CC\PPsi)^+\CC\AA\ker\CC$ directly mediate how close the conditions established in this section are to one another. We might then expect that the gap between them to scale with $\dim\ker\CC$, and therefore to be small when $p$ is not much less than $n$.

\section{NUMERICAL VALIDATION}
To validate and provide intuition for our results, we perform two types of numerical experiments on ensembles of random LDSs with $s$-sparse inputs, i.e. $\Delta$-sparse with $\Delta \in \{\Delta_s: s \in \mathbb{N}\}$:
\begin{enumerate}
    \item We evaluate the ability of \eqref{eqn:D1} to jointly recover the initial state and sparse inputs for different system parameters and sparsity levels.
    \item We explore the relationship between $\nsc(\CC\PPsi, \Delta_s)$ and $\nsc(\PP_1^\perp \GGamma_1, \Delta_s)$ as a function of system parameters and sparsity levels.
\end{enumerate}
Throughout, we employ the 1-step TSA branch-and-bound algorithm of \cite{choComputablePerformanceGuarantees2018} to compute nullspace constants to a tolerance of $\pm 0.05$.  This level of precision was chosen to permit the estimation of nullspace constants for much larger sparsity levels with a statistically significant amount of systems than would be permitted if we were to compute them exactly. 

\subsection{System Generation and Computing Information}
For a given $n, m, p$, systems $\Sigma = (\AA, \PPsi, \CC)$ were generated randomly, mirroring the strategy of \cite{sefatiLinearSystemsSparse2015}:
\begin{itemize}
    \item $\AA$ is i.i.d. Gaussian in each entry with variance $1/n$, s.t. each eigenvalue $\lambda$ of $\AA$ satisfies $|\lambda| < 0.9$.
    \item $\PPsi$ is i.i.d Gaussian in each entry with variance $1/n$.
    \item $\CC$ is i.i.d. Gaussian with unit variance.
\end{itemize}
For all systems considered, we choose $m=20$, to ensure tractability of $\Delta_s$-nullspace constant estimates. We also restrict our attention to $n \le m$ and $p \le m$, focusing on the usual assumption of ``overcompleteness'' of $\PPsi$. We analyze a total of 25173 systems across all parameter ranges. We conducted all experiments and analysis in Python, using CVXPY with the GUROBI solver for implementation of linear programming.

\subsection{Success of recovery}
The first analysis we present provides numerical support for the claims of necessity and sufficiency of the conditions established in corollaries \ref{corr:ness} and \ref{corr:suff}, that $\nsc(\CC\PPsi,\Delta_s) < 0.5$ is necessary and $\nsc(\PP_1^\perp \GGamma_1,\Delta_s) < 0.5$ is sufficient for $\Sigma$ to be input $\Delta_s^N$-recoverable.

\begin{table}
    \centering  
    \vspace{2mm}
    \caption{Number of systems with at least one imperfect joint recovery across 30 trials out of total occurrences of systems, with given ranges of $\sigma_{\CC\PPsi}:= \nsc(\CC\PPsi, \Delta_s) $ and $\sigma_{\PP_1^\perp \GGamma_1} := \nsc(\PP_1^\perp\GGamma_1, \Delta_s)$. $n \in \{12, 16, 20\}$, $1 \le p \le 20$, $1 \le s \le 5$ and $m=20$. $\sigma \sim 0.5$ if $|\sigma - 0.5| \le 0.05$ (experimental tolerance). }
    
    \begin{tabular}{c|c|c|c}
         & $\sigma_{\CC\PPsi} < 0.5$ & $\sigma_{\CC\PPsi} \sim 0.5$ & $\sigma_{\CC\PPsi}  > 0.5$ \\
         \hline
         $\sigma_{\PP_1^\perp \GGamma_1} > 0.5$ & $153/800$ & $34/289$ & $9068/9866$ \\
         \hline
         $\sigma_{\PP_1^\perp \GGamma_1}\sim 0.5$ & $0/116$ & $60/336$ & $2/10$ \\
         \hline
         $\sigma_{\PP_1^\perp \GGamma_1} < 0.5$ & $0/2100$ & $0/0$ & $0/0$
    \end{tabular}

    \label{fig:recovery_stats}
\end{table}

For each system, we choose a random $s \le 10$, and simulate 30 random combinations of initial conditions and entrywise $s$-sparse inputs of length $N=n$. As in \cite{sefatiLinearSystemsSparse2015}, we sample each entry of the initial conditions uniformly from $[-5,5]$, choose every support $S_k \subset \range{m}$ uniformly without replacement s.t. $|S_k|=s$, and sample each nonzero input entry uniformly on $[-5, 5]$. In Table \ref{fig:recovery_stats}, we provide the fraction of systems for which there was at least one imperfect joint recovery of the 30 trials conducted as a function of the nullspace constants $\nsc(\CC\PPsi, \Delta_s)$ and $\nsc(\PP_1^\perp \GGamma_1, \Delta_s)$, which were determined up to being less than $0.5$, greater than $0.5$, or in $[0.45, 0.55]$. From the table, we see our conditions behave as expected: all systems with $\nsc(\PP_1^\perp \GGamma_1, \Delta_s) < 0.5$ exhibit perfect recovery; a majority of systems with $\nsc(\PP_1^\perp \GGamma_1, \Delta_s) > 0.5$ exhibit imperfect recovery; and $\nsc(\PP_1^\perp \GGamma_1, \Delta_s) \ge \nsc(\PP_1^\perp \GGamma_1, \Delta_s)$ for all systems. In particular, there were many systems such that, for the sparsity level considered, $\nsc(\CC\PPsi, \Delta_s) < 0.5$ was not sufficient for perfect input recovery, as $\nsc(\PP_1^\perp \GGamma_1, \Delta_s) > 0.5$.

\begin{figure}[t]
    \centering
    \vspace{3mm}
    \includegraphics[width=0.5\textwidth]{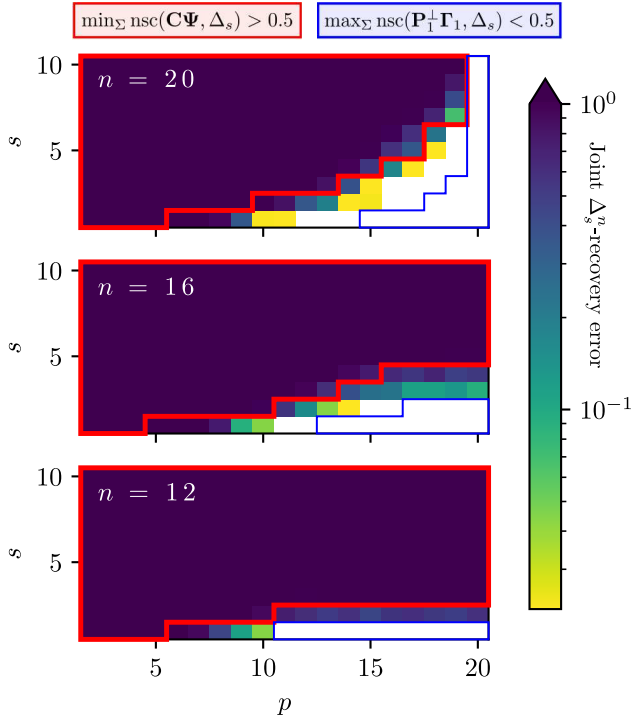}
    \caption{Empirical probability of imperfect joint recovery of entrywise $\Delta_s$-sparse signals and generic initial conditions as a function of $s$ and $p$ with $m~=~20$, $n \in \{12, 16, 20\}$, and $N=n$; white indicates all signals perfectly recovered. Within the red outlines are $(s,p)$ such that all simulated systems satisfied $\nsc(\CC\PPsi, \Delta_s) > 0.5$. Within the blue outline are $(s,p)$ such that all simulated systems satisfied $\nsc(\PP_1^\perp\GGamma_1, \Delta_s) < 0.5$.}
    \label{fig:phaseplot}
\end{figure}

Figure \ref{fig:phaseplot} visualizes this data in a different way, presenting a joint recovery phase transition plot over $p$ and $s$ for $n \in \{12, 16, 20\}$ and $N=n$. The intensity of each pixel indicates the empirical probability of a system with dimensions $p, n, m$ and sparsity level $s$ exhibiting imperfect joint recovery. The colormap normalization is in log-scale to better illustrate the case of perfect recovery for all trials, which we plot as white. The red and blue outlined regions indicate $(p,n,m,s)$ such that every system simulated satisfied $\nsc(\CC\PPsi, \Delta_s) > 0.5$ (red, failure of necessary condition) or $\nsc(\PP_1^\perp\GGamma_1, \Delta_s) < 0.5$ (blue, satisfaction of sufficient condition). We observe that there are no white pixels contained in the red outlined regions, supporting the claim of $\nsc(\CC\PPsi, \Delta_s) > 0.5$ being necessary for joint $\Delta_s^n$-recoverability. The red outlined region also becomes strictly smaller as $n$ increases, as expected as this indicates $\CC\PPsi$ with generically higher rank. Likewise, we observe that there are no colored pixels contained in the blue outlined regions, supporting the claim that $\nsc(\PP_1^\perp\GGamma_1, \Delta_s) < 0.5$ is sufficient for joint $\Delta_s^n$-recoverability. The blue regions appear to shrink to the right as $n$ increases, widening the gap in $s$ between the red and blue outlined regions for smaller $p$. This reflects the fact that increased $n$ will generically result in an increase in the dimension of $\ker\CC$, and thus an increase in the dimension of $\ker \PP_1^\perp \GGamma_1$.

\subsection{Analysis of NSC Relationship}
We conclude our analysis by taking a closer look at the relationship between the two main quantities motivated in this work, $\nsc(\CC\PPsi, \Delta_s)$ and $\nsc(\PP_1^\perp \GGamma_1, \Delta_s)$. Here we do not terminate once determining whether the constant is above or below 0.5, as was done for the previous section, opting instead to compute bounds on these constants for each system up to a tolerance of $\pm 0.05$.
In figure \ref{fig:nsc_compare}, for various $n,m,p, s$ we plot $\nsc(\CC\PPsi, \Delta_s)$ against $\nsc(\PP_1^\perp \GGamma_1, \Delta_s)$ . Each plotted point represents the midpoint of the computed bounds on these constants for a given $\Sigma$ and sparsity level $s$. We see that for every set of parameters considered, the nullspace constants tend to increase with the sparsity level, and that $\nsc(\CC\PPsi, \Delta_s) \le \nsc(\PP_1^\perp \GGamma_1, \Delta_s)$ as expected from the fact that $\ker \CC\PPsi \subset (\CC\PPsi)^+\CC\AA\ker\CC + \ker \CC\PPsi = \ker \PP_1^\perp \GGamma_1$. 

In the top set of figures we fix $n=19$ and $m=20$, and illustrate the trend of $\nsc(\CC\PPsi, \Delta_s)$ and $\nsc(\PP_1^\perp \GGamma_1, \Delta_s)$ as $p$ increases. For $p=11$, $\nsc(\PP_1^\perp \GGamma_1, \Delta_s)$ is quite pessimistic, echoing the wide gap between the red and blue regions of figure \ref{fig:phaseplot} for $n=20$ for moderate $p$. As $p$ increases, points approach the diagonal and tend to decrease in magnitude, reflecting the intuitively more favorable recovery properties of a larger number of measurements. When $p=n$, we see that the two constants become equal; this reflects the fact that $\CC \in \mathbb{R}^{p \times n}$ will be generically full rank and thus $\ker \PP_1^\perp \GGamma_1 = (\CC\PPsi)^+\CC\AA\ker\CC + \ker \CC\PPsi = \ker \CC\PPsi$.

In the bottom set of figures we fix $p=11$ and $m=20$, and illustrate the trend of $\nsc(\CC\PPsi, \Delta_s)$ and $\nsc(\PP_1^\perp \GGamma_1, \Delta_s)$ as $n$ increases. Up to our tolerance, we see again that for $n=11$, $n=p$ and so the constants are equal. With the increase in $n$, we observe that $\nsc(\CC\PPsi, \Delta_s)$ stays roughly fixed, while $\nsc(\PP_1^\perp \GGamma_1, \Delta_s)$ increases. This is qualitatively similar to the effect of decreasing $p$, but in that case, \textit{both} constants were affected. This comes as no surprise, as for $p \le n \le m$ we expect the distribution of $\CC\PPsi$ to be relatively unaffected, but $\dim\ker\CC$ to increase, and thus for $\dim \ker \PP_1^\perp \GGamma_1$ to increase. Another, more intuitive interpretation is that by increasing $n$ with $p$ fixed, there is an increased amount of interference from the initial condition mixed in with the sparse inputs, and so guarantees on the recovery of even the sparse inputs alone will naturally worsen for recovery with a given number of time steps $N$. The fact that recovery empirically does better across the board for larger $n$ in figure \ref{fig:phaseplot} is explained by the fact that with a larger state space, a larger number of time steps will continue to yield insight about previous inputs and the initial condition.

Overall, these plots support the point that propositions \ref{prop:cpsi_necc} and \ref{prop:gnup_alln} can actually be fairly tight, provided that $p$ is not much less than $n$. Larger $p$ is in general seen to be better across the board, while increasing $n$ relative to $p$ results in less favorable sufficient guarantees. Importantly, they also reinforce the fact that in most, but not all, cases, both $\nsc(\CC\PPsi, \Delta_s)$ and $\nsc(\PP_1^\perp \GGamma_1, \Delta_s)$ will be either less than or greater than $0.5$. The cases for which they disagree are few, indicating potential practical utility for these quantities for determining joint $\Delta^N$-recovery with \eqref{eqn:D1} in applications.

\begin{figure*}[!ht]
  \centering
  \begin{minipage}{0.35\textwidth}
  \begin{subfigure}{\textwidth}
    \centering
    \includegraphics[width=\linewidth]{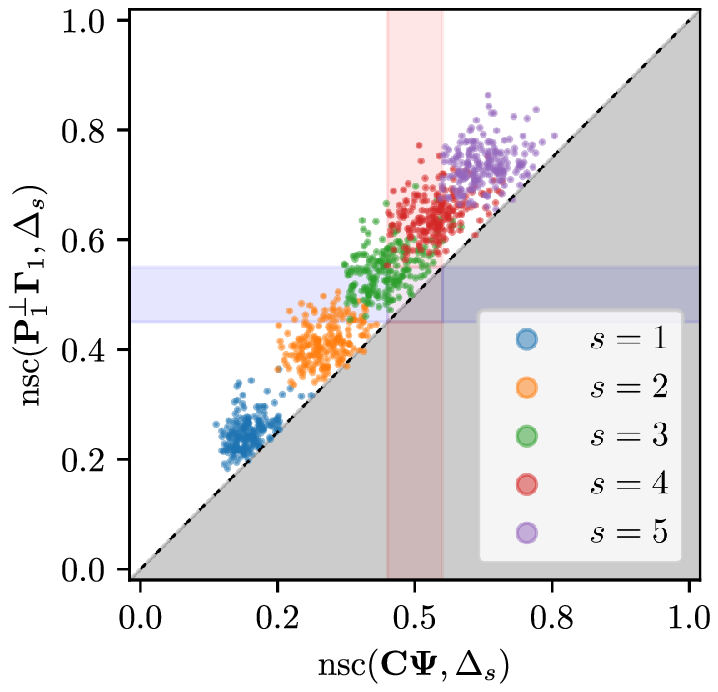}
    \caption{Example: $n=19, m=20, p=17$}
    \label{fig:example}
  \end{subfigure}%
    \end{minipage}
  \hfill
\begin{minipage}{.64\linewidth}
    \begin{subfigure}{\textwidth}
    \includegraphics[width=\linewidth]{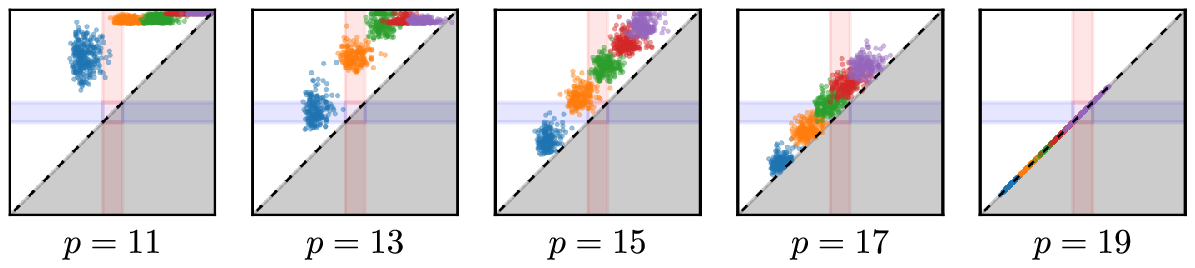}
    \caption{Trend as $p$ increases ($n=19, m=20$)}
    \label{fig:p_inc}
  \end{subfigure}\\
  \begin{subfigure}{\linewidth}
    \includegraphics[width=\linewidth]{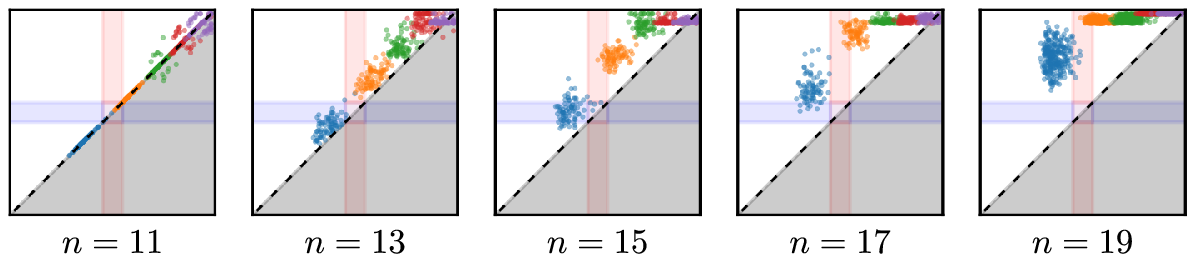}
    \caption{Trend as $n$ increases ($p=11$, $m=20$)}
    \label{fig:n_inc}
  \end{subfigure}
  \end{minipage}

    \caption{Distribution of $\Delta_s$-nullspace constants for random systems up to a tolerance of $0.05$ (11656 total systems). (\ref{fig:example}) shows in detail the case of $n=19, m=20, p=17$. The red region is where $|\nsc(\CC\PPsi, \Delta_s) - 0.5| < 0.05$, and the blue region indicates $\nsc(\PP_1^\perp \GGamma_1, \Delta_s) - 0.5| < 0.05$. White regions indicate that $\nsc(\CC\PPsi, \Delta_s), \nsc(\PP_1^\perp\GGamma_1, \Delta_s)$ are each greater/less than 0.5 with certainty. The shaded region under the diagonal corresponds to the impossible case $\nsc(\CC\PPsi, \Delta_s) > \nsc(\PP_1^\perp\GGamma_1, \Delta_s)$. 
    (\ref{fig:p_inc}) illustrates that as $p$ increases, both $\nsc(\CC\PPsi, \Delta_s), \nsc(\PP_1^\perp\GGamma_1, \Delta_s)$ tend to decrease, with $\nsc(\PP_1^\perp\GGamma_1, \Delta_s)$ converging to $\nsc(\CC\PPsi, \Delta_s)$. 
    (\ref{fig:n_inc}) illustrates that as $n$ increases, $\nsc(\PP_1^\perp\GGamma_1, \Delta_s)$ increases while $\nsc(\CC\PPsi, \Delta_s)$ is relatively unaffected.}
  \label{fig:nsc_compare}
\end{figure*}

\section{CONCLUSIONS AND FUTURE WORK}

We have presented the first necessary and sufficient conditions
under which the joint recovery of generic initial conditions and sparse inputs for a linear dynamical system is well-posed and may be carried out via $\ell_1$-minimization. Leveraging this characterization, we further provide two simple necessary, and sufficient conditions for joint $\Delta^N$-recoverability.
In contrast to previous work, these conditions have intuitive justifications and can be computationally verified for most systems.
Through $\ell_1$-recovery experiments on random systems, we showed that these conditions are useful indicators of recovery performance. 

We have identified several exciting avenues for future research building on the necessary and sufficient conditions introduced here. One direction is to extend these results to general systems with feedthrough terms (nonzero $\mathbf{D}$ matrix) and broaden the scope of recovery results to incorporate recovery with delay. More generally, we aim to explore notions of strong observability for linear systems with sparse inputs. Indeed, there are many significant results in linear systems theory that we believe admit direct extensions to the case of sparse inputs, and for which necessary and sufficient conditions using $\ell_1$-minimization and related characterizations may be achieved by building on the generalized nullspace property, just as was illustrated here.

\section*{ACKNOWLEDGEMENTS}
The authors would like to thank Dr. Adam Charles for invaluable discussions during the preparation of this manuscript.

\bibliographystyle{IEEEtran}
\bibliography{bibliography-final}

\end{document}